\DeclareMathOperator*{\poly}{poly}
\DeclareMathOperator*{\polylog}{polylog}
\def\EMPH#1{\emph{#1}}
\newcommand{\de}{\mathrm{d}}
\newcommand{\ex}{\mathbb{E}}
\newcommand{\pr}{\mathbb{P}}
\newcommand{\eun}{E_\mathrm{un}}
\newcommand{\ecl}{E_\mathrm{cl}}
\newcommand\myeq{\stackrel{\mathclap{\normalfont\mbox{(Lemma \ref{IMP Inequaly})}}}{=}}
\newcommand\myobeq{\stackrel{\mathclap{\normalfont\mbox{(Observation \ref{Path Bound})}}}{=}}
\newcommand\myineq{\stackrel{\mathclap{\normalfont\mbox{(Lemma \ref{IMP Inequaly})}}}{\leq}}
\newcommand\mydineq{\stackrel{\mathclap{\normalfont\mbox{(Claim \ref{degrees})}}}{\leq}}
\title{Density-Sensitive Algorithms for $(\Delta + 1)$-Edge Coloring} %TODO Please add
\titlerunning{Density-Sensitive Algorithms for $(\Delta + 1)$-Edge Coloring} %TODO optional, please use if title is longer than one line
\author{Sayan Bhattacharya}{University of Warwick, United Kingdom}{s.bhattacharya@warwick.ac.uk}{}{}
\author{Mart\'{i}n Costa}{University of Warwick, United Kingdom}{martin.costa@warwick.ac.uk}{}{}
\author{Nadav Panski}{Tel Aviv University, Israel}{nadavpanski@mail.tau.ac.il}{}{}
\author{Shay Solomon}{Tel Aviv University, Israel}{shayso@tauex.tau.ac.il}{}{}
\authorrunning{S. Bhattacharya, M. Costa, N. Panski and S. Solomon} %TODO mandatory. First: Use abbreviated first/middle names. Second (only in severe cases): Use first author plus 'et al.'
\keywords{Dynamic Algorithms, Graph Algorithms, Edge Coloring, Arboricity} %TODO mandatory; please add comma-separated list of keywords
\begin{document}

\maketitle

%TODO mandatory: add short abstract of the document
\begin{abstract}
Vizing's theorem asserts the existence of a \EMPH{$(\Delta+1)$-edge coloring} for any graph $G$, where $\Delta = \Delta(G)$ denotes the maximum degree of $G$. Several polynomial time $(\Delta+1)$-edge coloring algorithms are known, and the state-of-the-art running time (up to polylogarithmic factors) is $\tilde{O}(\min\{m \sqrt{n}, m \Delta\})$,\footnote{Here and throughout the $\tilde{O}$ notation suppresses $\polylog(n)$ factors.} by Gabow, Nishizeki, Kariv, Leven and Terada from 1985, where $n$ and $m$ denote the number of vertices and edges in the graph, respectively. Recently, Sinnamon shaved off a $\polylog(n)$ factor from the time bound of Gabow et al.

The {\em arboricity} $\alpha = \alpha(G)$ of a graph $G$ is the minimum number of edge-disjoint forests into which its edge set can be partitioned, and it is a measure of the graph's ``uniform density''. While $\alpha \le \Delta$ in any graph, many natural and real-world graphs exhibit a significant separation between $\alpha$ and $\Delta$.

In this work we design a $(\Delta+1)$-edge coloring algorithm with a running time of $\tilde{O}(\min\{m \sqrt{n}, m \Delta\})\cdot \frac{\alpha}{\Delta}$, thus improving the longstanding time barrier by a factor of $\frac{\alpha}{\Delta}$. In particular, we achieve a near-linear runtime for bounded arboricity graphs (i.e., $\alpha = \tilde{O}(1)$) as well as when $\alpha = \tilde{O}(\frac{\Delta}{\sqrt{n}})$. Our algorithm builds on Gabow et al.'s and Sinnamon's algorithms, and can be viewed as a density-sensitive refinement of them.
\end{abstract}

\section{Introduction}
A \EMPH{(proper) $k$(-edge) coloring} in a graph $G$ is a coloring of edges,
where each edge is assigned a color from 
the set $[k] := \{1,\ldots,k\}$,
such that no two adjacent edges have the same color.
Clearly, $k$ must be at least as large as the maximum degree $\Delta = \Delta(G)$ of the graph $G$, and Vizing's theorem \cite{vizing1964estimate} states that $\Delta+1$ colors always suffice; in some graphs $\Delta+1$ colors are necessary.

Several polynomial time $(\Delta+1)$-coloring algorithms are known, 
including a simple $O(m n)$-time algorithm by Misra and Gries \cite{MG92}, which is a simplification of an earlier algorithm by Bollob{\'a}s \cite{Bol82}.
In 1985  Gabow, Nishizeki, Kariv, Leven and Terada \cite{Gabow85} presented a $(\Delta+1)$-coloring
algorithm with a running time of $O(\min\{m \sqrt{n \log n}, m \Delta \log n\})$.
A recent work by Sinnamon \cite{Sin} shaves off some $\polylog(n)$ factors; specifically, Sinnamon removed the $\sqrt{\log n}$ factor from the term $m \sqrt{n \log n}$, to achieve a clean runtime bound of $O(m \sqrt{n})$.
Nonetheless, up to the $\polylog(n)$ factors, no improvement on the runtime of the algorithm of \cite{Gabow85} was reported to date.
We summarize this state-of-the-art result in the following theorem:
\begin{theorem} [Gabow et al.\ \cite{Gabow85}] \label{gab}
For any $n$-vertex $m$-edge graph of maximum degree $\Delta$,
a $(\Delta+1)$-edge coloring can be computed within time $\tilde{O}(\min\{m \sqrt{n}, m \Delta\})$.

\end{theorem}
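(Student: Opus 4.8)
The plan is to take Vizing's constructive argument as a black box and then accelerate it with an Euler-partition divide-and-conquer. Since $G$ has maximum degree $\Delta$ and $\Delta+1$ colors are available, under any partial proper coloring every vertex has at least one \emph{free} color, and this is what fuels Vizing's recoloring step: given an uncolored edge $uv$, one builds a maximal \emph{fan} $v=v_0,v_1,\dots,v_k$ of neighbours of $u$ in which the edge $uv_{i+1}$ carries a color free at $v_i$; if the fan ``closes'' a single rotation colors $uv$, and otherwise one extracts a color $\alpha$ free at $u$ and a color $\beta$ free at the tip $v_k$, flips the bichromatic $(\alpha,\beta)$ Kempe path emanating from $u$, and then rotates an appropriate prefix of the fan. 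Correctness is precisely Vizing's lemma. The fan has at most $\Delta$ vertices and is found in $\tilde{O}(\Delta)$ time; the Kempe path is traced and flipped in time linear in its length, which in the worst case is $\Theta(n)$. Performing this primitive once per edge already yields the textbook $O(mn)$ bound, so the real task is to bound the \emph{total} length of the Kempe paths that ever get flipped.

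\emph{Euler-partition recursion.} Add a dummy vertex joined to every odd-degree vertex so that all degrees become even, take an Eulerian circuit in each component, and $2$-color its edges alternately; discarding the dummy edges splits $E(G)=E_1\sqcup E_2$ so that each vertex has degree $\lceil d(v)/2\rceil$ in one part and $\lfloor d(v)/2\rfloor$ in the other. Thus $G_1=(V,E_1)$ and $G_2=(V,E_2)$ have maximum degree at most $\lceil\Delta/2\rceil$, and this costs $O(m)$. Recursively color $G_1$ and $G_2$ with $\lceil\Delta/2\rceil+1$ colors each on disjoint palettes; this is a proper coloring of $G$ using at most $\Delta+3$ colors, which we then \emph{reduce} to $\Delta+1$ colors by uncoloring the (at most two) smallest color classes --- only $O(m/\Delta)$ edges --- and re-inserting them one by one with the primitive above. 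Because the sizes of the smallest classes shrink geometrically as the maximum degree halves down the recursion, the total number of re-insertions, summed over the whole recursion tree (which has depth $O(\log\Delta)$), is $O(m)$.

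\emph{Accounting for the Kempe paths, and the two regimes.} What remains is to bound the total length of the $O(m)$ Kempe-path flips performed during all the re-insertions; done naively this is $O(mn)$. The key is to exploit the slack in the primitive: an over-long Kempe path is \emph{truncated}, the affected edge is put back into the queue, and the number of such re-queues is paid for by a potential function. Tuned to the case $\Delta\le\sqrt n$, this makes each re-insertion cost $\tilde{O}(\Delta)$ amortized, for a total of $\tilde{O}(m\Delta)$. For the case $\Delta>\sqrt n$ one instead stops the recursion once the maximum degree has dropped to $\Theta(\sqrt n)$, colors each of the resulting pieces by the $\tilde{O}(m\Delta)$ method just described at a total cost of $\tilde{O}(m\sqrt n)$, and bounds the cost of the $O(\log(\Delta/\sqrt n))$ higher levels of reductions by capping every Kempe path at length $\tilde{O}(\sqrt n)$ before re-queueing --- again $\tilde{O}(m\sqrt n)$ in total. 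Since $n,m,\Delta$ are known in advance, running whichever of the two variants is cheaper gives the running time $\tilde{O}(\min\{m\sqrt n,m\Delta\})$; Sinnamon's refinement then shaves the remaining polylogarithmic factor.

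\emph{Main obstacle.} Essentially all of the difficulty is in the last step: proving that the Kempe-path flips triggered by the reductions can be carried out in $\tilde{O}(\min\{m\sqrt n,m\Delta\})$ total time rather than $O(mn)$. This requires (i) the structural observation that a truncated, partially rotated Vizing chain still leaves a valid partial proper coloring, so that re-queueing the affected edge is legitimate, and (ii) an amortized analysis that simultaneously controls the lengths of the Kempe paths actually flipped and the number of times any single edge is re-queued over the entire recursion. The remaining ingredients --- the Euler partition, the bookkeeping on palette sizes, and correctness of each individual primitive step --- are routine given Vizing's lemma.
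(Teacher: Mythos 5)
There is a genuine gap, and it sits exactly where you place it yourself: the ``accounting for the Kempe paths'' step. Your sketch proposes to truncate over-long Kempe paths, re-queue the affected edge, and charge the re-queues to a potential function, claiming amortized cost $\tilde{O}(\Delta)$ per insertion when $\Delta\le\sqrt n$ and a $\tilde{O}(\sqrt n)$ cap otherwise. This is asserted, not proved, and it is not a routine step: flipping only a prefix of a maximal $(\alpha,\beta)$-path creates a color conflict at the truncation point unless you uncolor an edge there, so your ``structural observation (i)'' already requires a new uncolored edge and a guarantee that the process of repeatedly doing this terminates quickly --- precisely the hard part. Machinery of this kind (truncated or multi-step Vizing chains with an amortized termination argument) does not appear in Gabow et al.\ or Sinnamon at all; it only shows up in much more recent work (e.g.\ the $\tilde{O}(mn^{1/3})$ algorithms cited in the paper's ``Subsequent Work''), and getting it to work is a substantial contribution in its own right, not a tuning of the classical proof.

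The cited proof avoids truncation entirely, in both regimes. For the $\tilde{O}(m\Delta)$ bound, Gabow et al.\ use a \texttt{Parallel-Color} procedure that, in $O(m)$ time, colors a constant fraction of the uncolored edges incident to vertices missing a chosen color, so $O(\Delta\log n)$ rounds suffice; Sinnamon's simpler randomized substitute picks a uniformly random uncolored edge, endpoint and missing color, and bounds the \emph{expected} path length by $O(m\Delta/l)$: each colored edge lies on at most $\Delta+1$ maximal alternating paths, so the total length of all such paths is $O(m\Delta)$, while any fixed path is selected with probability $O(1/l)$; summing over $l=m,\dots,1$ gives $O(m\Delta\log n)$ with no truncation or re-queueing. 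For the $\tilde{O}(m\sqrt n)$ bound, no control of individual path lengths is needed at the upper recursion levels: after pruning, level $i$ has only $O(m2^i/\Delta)$ uncolored edges, each repaired in worst-case $O(n)$ time, so the per-level repair cost $O(mn2^i/\Delta)$ grows geometrically and is dominated by the bottom level, where the recursion is cut off at maximum degree $\Theta(\sqrt{n/\log n})$ and the low-degree algorithm takes over, yielding $O(m\sqrt{n\log n})$ overall. So your skeleton (Vizing primitive, Euler partition, prune-and-repair, two regimes) matches the real proof, but the one step that carries all the difficulty is replaced by an unproved and unnecessary mechanism; to close the argument you should replace the truncation/amortization claim by either the parallel-coloring round argument or Sinnamon's expected-path-length argument, plus the geometric per-level accounting with the $\Theta(\sqrt{n/\log n})$ cutoff.
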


Note that the runtime bound provided by Theorem \ref{gab}  
is near-linear in bounded degree graphs.
However, in most graphs of interest, the maximum degree $\Delta$ is large.
The question of whether or not one can significantly improve this runtime bound in graphs of large maximum degree $\Delta$ has remained open.

% \paragraph{Bounded Arboricity.}
\medskip
\noindent \textbf{Bounded Arboricity.}
Sparse graphs, or graphs of ``low density'', are of importance in both theory and practice. A key  definition that   captures the property of low density in a ``uniform manner'' is bounded \EMPH{arboricity}, which constrain the average degree of any subgraph.

\begin{definition} Graph $G$ has \EMPH{arboricity} $\alpha = \alpha(G)$ if
 $\frac{m_s}{n_s-1}\leq \alpha$, for every $S\subseteq V$, where $m_s$ and $n_s$ are the number of edges and vertices in the graph induced by $S$, respectively.
Equivalently (by the Nash-Williams theorem \cite{nash1964}), the edges of a graph of arboricity $\alpha$ can be decomposed into $\alpha$ edge-disjoint forests.
\end{definition}

While $\alpha \le \Delta$ holds
in any graph $G$,
there might be a large separation between $\alpha$ and $\Delta$; e.g., for the $n$-star graph we have $\alpha = 1, \Delta = n-1$. 
A large separation between $\alpha$ and $\Delta$ is exhibited 
in  many natural and
real-world graphs, such as the world wide web graph, social networks and transaction networks, as well as in various random distribution models, such as the preferential attachment model. 
Note also that the family of bounded arboricity graphs, even for $\alpha = O(1)$, includes all graphs that exclude a fixed minor, which, in turn include all bounded treewidth and bounded genus graphs, and in particular all planar graphs. 

In this work we present a near-linear time 
$(\Delta+1)$-coloring algorithm in graphs of arboricity $\alpha = \tilde{O}(1)$.
Further, building on our new algorithm,  
we present an algorithm that improves over the longstanding time barrier (provided by Theorem \ref{gab}) by a factor of $\frac{\alpha}{\Delta}$, as summarized in the following theorem.
\begin{theorem} \label{th:main}
For any $n$-vertex $m$-edge graph of maximum degree $\Delta$ and arboricity $\alpha$,
there is a randomized algorithm that computes a $(\Delta+1)$-edge coloring within time $\tilde{O}(\min\{m \sqrt{n} \cdot \frac{\alpha}{\Delta},
m \alpha\})
= \tilde{O}(\min\{m  \sqrt{n}, m \Delta\}) \cdot \frac{\alpha}{\Delta}$. The time bound holds both in expectation and with high probability.
\end{theorem}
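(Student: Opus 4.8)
The plan is to reopen the algorithms behind Theorem~\ref{gab} (Gabow et al.\ and Sinnamon's streamlining of it) and trace every appearance of the maximum degree $\Delta$, replacing it with the arboricity $\alpha$ wherever possible. These algorithms color the edges using \emph{Vizing chains}: to color an uncolored edge $uv$ one grows a \emph{fan} of already-colored edges at one endpoint and then flips the colors along an \emph{alternating two-color path}, and the running time is, up to $\polylog n$ factors, the total length of all the fans plus the total length of all the flipped paths (Sinnamon's randomized variant bounds the latter in expectation). We use arboricity in two places. First, we compute in $O(m+n)$ time a degeneracy ordering $v_1,\dots,v_n$ — equivalently an orientation of out-degree $O(\alpha)$ — in which each $v_i$ has at most $2\alpha$ neighbors preceding it; this exists because every subgraph of $G$ has a vertex of degree $\le 2\alpha$ by the arboricity bound. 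We then process the vertices in this order, and when we reach $v_i$ we only need to (re)color the $\le 2\alpha$ edges joining $v_i$ to $\{v_1,\dots,v_{i-1}\}$. Since all other edges at $v_i$ are still uncolored at that point, $v_i$ has degree $\le 2\alpha$ in the currently-colored subgraph, so every fan rooted at $v_i$ has length $O(\alpha)$ — this replaces $\Delta$ by $O(\alpha)$ in the fan-cost term.

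Second, and this is the crux, we must control the flipped alternating paths, which in the worst case have length $\Theta(n)$ and are exactly where the $\sqrt n$ (resp.\ $\Delta$) factor of Theorem~\ref{gab} originates. Here we import Sinnamon's random-truncation idea — grow a \emph{randomly chosen} Vizing chain at the low-degree endpoint and flip only a random prefix of it — but analyze it density-sensitively. The key quantitative lemma to prove is that, under the degeneracy-ordered processing, the expected length of the segment flipped per colored edge is $\tilde O\!\left(\min\{\sqrt n\cdot \tfrac{\alpha}{\Delta},\ \alpha\}\right)$: intuitively, although the low-degree endpoint still has $\Omega(\Delta)$ free colors from which a chain can be built, the two-color subgraphs these chains live in are globally sparse (total size $\tilde O(n\alpha)$ rather than $\tilde O(n\Delta)$, by the arboricity bound), so a uniformly random choice lands on a short chain with sufficiently high probability. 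Summing over the $m$ edges yields total time $\tilde O\!\left(m\cdot\min\{\sqrt n\cdot\tfrac{\alpha}{\Delta},\ \alpha\}\right)=\tilde O\!\left(\min\{m\sqrt n,\ m\Delta\}\right)\cdot\tfrac{\alpha}{\Delta}$, and the high-probability claim follows from a Chernoff/Azuma bound over the $m$ essentially independent random choices made by the algorithm. When $\Delta\gg\sqrt n$ one first applies Gabow et al.'s Euler-partition halving recursion at the top $O(\log(\Delta/\alpha))$ levels to bring the effective maximum degree down to $\tilde O(\alpha)$ before running the degeneracy-ordered coloring; one has to verify that the merge / defect-repair steps of that recursion are themselves density-sensitive, which they are because each repair is again a Vizing-chain operation governed by the lemma above.

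The main obstacle is precisely this density-sensitive path lemma. One must argue that restricting fans to the already-colored prefix still permits a successful fan rotation (this uses the slack of the $(\Delta+1)$st color — when $v_i$ is processed it has at least $\Delta+1-2\alpha$ free colors), that random truncation causes at most $\polylog n$ re-attempts in expectation, and that the bound composes across the $O(\log n)$ levels of the Euler-partition recursion without losing the $\tfrac{\alpha}{\Delta}$ factor. Making the randomization and the amortization interact correctly — in particular forcing the ``$\sqrt n$'' in Sinnamon's analysis to provably shrink to ``$\sqrt n\cdot\tfrac{\alpha}{\Delta}$'' rather than merely to ``$\sqrt n$'' — is the technical heart of the proof; everything else is bookkeeping of $\polylog n$ factors.
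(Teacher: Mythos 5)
Your outline has the right flavor (small fans via low degrees, Sinnamon-style randomization for the paths, Euler-partition recursion for large $\Delta$), but the step you yourself identify as the heart of the proof --- the ``density-sensitive path lemma'' asserting expected flipped length $\tilde O(\min\{\sqrt n\cdot\alpha/\Delta,\ \alpha\})$ per edge --- is not proved, and the justification you offer for it is incorrect. The union of all two-color subgraphs has total size $\Theta(m\Delta)$, not $\tilde O(n\alpha)$: every colored edge of color $c$ lies in the $(c,c')$-subgraph for each of the $\Delta$ other colors $c'$, and arboricity does nothing to reduce that count. The correct replacement (the paper's key lemma) is finer: a colored edge $e=(u,v)$ can be an \emph{internal} edge of at most $w(e)=\min\{\de(u),\de(v)\}$ maximal alternating paths, since both endpoints must be incident to the second color; hence the total internal length of all maximal paths is at most $\sum_e w(e)\le 2m\alpha$, and since a uniformly random uncolored edge and random missing color select any fixed maximal path with probability $O(1/l)$, each coloring step costs $O(m\alpha/l)$ in expectation, giving the $\tilde O(m\alpha)$ branch. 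Your degeneracy-ordered processing actually undercuts this argument: by forcing the next edge to be a back-edge of the current vertex you destroy the uniform choice over all $l$ uncolored edges on which the $1/l$ path-probability bound rests (the paper keeps the edge fully random and merely roots the fan at the minimum-degree endpoint, which already yields expected fan size $O(\frac{1}{l}\sum_{e\ \mathrm{uncolored}} w(e))=O(m\alpha/l)$ with no ordering at all). Moreover, ``flip only a random prefix'' is not an operation in Sinnamon's algorithm and is not valid as a single step: truncating an alternating path does not free a color at the fan's center, it only moves the uncolored edge along the chain, and making that idea work requires the multi-step Vizing-chain machinery of later papers, which you neither develop nor can import as a black box here.

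The $\sqrt n\cdot\alpha/\Delta$ branch also does not follow from your sketch. In the paper it comes from the recursion itself: after merging the two recursive colorings, one uncolors the three color classes of minimum \emph{weight} (not size), so the uncolored edges have total weight $O(m\alpha/\Delta)$; the repair cost is then split between high-weight uncolored edges (at most $O(m\alpha/\Delta^2)$ of them, each path trivially of length at most $n$) and low-weight ones, whose chosen endpoint has $\Omega(\Delta)$ missing colors, shrinking the probability of any fixed path by a further factor of $\Delta$ and giving $\tilde O(m\alpha/\Delta)$ in total. None of this weight-based pruning or case split appears in your plan. Finally, your recursion depth is wrong: stopping the Euler-partition recursion only when the degree reaches $\tilde O(\alpha)$ makes the repair cost at the deepest levels blow up --- the per-level repair term behaves like $(W_i+m)\cdot n/\Delta_i^2$ and grows geometrically as $\Delta_i$ shrinks, so at $\Delta_i=\tilde\Theta(\alpha)$ it is roughly $mn/\alpha^2$, which for, say, polylogarithmic $\alpha$ is $\tilde\Theta(mn)$, far above the claimed near-linear bound. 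The recursion must bottom out at degree about $\sqrt{n/\log n}$, exactly as in Gabow et al.; the arboricity savings enter through the edge weights, not through a deeper recursion.
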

\noindent
{\bf Remark.} The exact runtime bound of our algorithm (including the $\polylog(n)$ factors that are suppressed under the $\tilde{O}$-notation in Theorem \ref{th:main})
    is the following:
    $O(\min\{m  \sqrt{n \log n}, m \Delta \log n\}) \cdot \frac{\alpha}{\Delta} + O(m  \log n)$ in expectation and $O(\min\{m  \sqrt{n} \log^{1.5} n, m \Delta \log^2 n\}) \cdot \frac{\alpha}{\Delta} + O(m \log^2 n)$ with high probability. However, we made no 
    attempt to optimize polylogarithmic factors in this work. Note that this runtime is near-linear when $\alpha = \tilde{O}(1)$ as well as when $\alpha = \tilde{O}(\frac{\Delta}{\sqrt{n}})$.
    
% \begin{enumerate}
%     \item The exact runtime bound of our algorithm (including the $\polylog(n)$ factors that are suppressed under the $\tilde{O}$-notation in Theorem \ref{th:main})
%     is the following:
%     $O(\min\{m  \sqrt{n \log n}, m \Delta \log n\}) \cdot \frac{\alpha}{\Delta} + O(m  \log n)$ in expectation and $O(\min\{m  \sqrt{n} \log^{1.5} n, m \Delta \log^2 n\}) \cdot \frac{\alpha}{\Delta} + O(m \log^2 n)$ with high probability. However, we made no 
%     attempt to optimize polylogarithmic factors in this work.
%     \item This runtime is near-linear when $\alpha = \tilde{O}(1)$ as well as when $\alpha = \tilde{O}(\frac{\Delta}{\sqrt{n}})$.
% \end{enumerate}

\medskip
\noindent
The aforementioned improvement of Sinnamon \cite{Sin} to the
state-of-the-art runtime bound by Gabow et al.\ \cite{Gabow85} 
(i.e., the removal of the $\sqrt{\log n}$ factor from the term $m \sqrt{n \log n}$)
was achieved via two algorithms: 
a rather intricate deterministic algorithm, which is very similar to a deterministic algorithm by \cite{Gabow85}, 
and an elegant randomized algorithm 
that greatly simplifies the deterministic one.
%\todo{the deteministic algorithm is verry similar to Gabow et al. so maybe we need to say: two algorithms: the intricate deterministic algorithm of Gabow et al. and
%a simple and elegant randomized algorithm which simplifing the algorithm of Gabow et al.}
Our algorithm follows closely Sinnamon's randomized algorithm, with one key difference: we {\em give precedence to low degree vertices and edges over high degree ones}, where the degree of an edge (which we shall refer to as its \EMPH{weight}) is the minimum degree of its endpoints.
Our algorithm can thus be viewed as a \EMPH{degree-sensitive refinement} of Sinnamon's algorithm.
The analysis of our algorithm 
 combines several new ideas  to achieve the claimed improvement in the running time.

\subsection{Technical Overview and Conceptual Contribution}

To compute a $(\Delta+1)$(-edge)-coloring, one can simply color the edges of the graph one after another, using what we shall refer to as a  \texttt{Color-One-Edge} procedure.
%start with an empty coloring and iteratively add more and more edges to the coloring using some coloring-extension algorithms untill all the edges are colored.
The most basic \texttt{Color-One-Edge} procedure is by Misra and Gries \cite{MG92}, and is a simplification of an algorithm by Bollob{\'a}s \cite{Bol82}. 
Given a graph with some \emph{partial} $(\Delta+1)$-coloring and an arbitrary uncolored edge $e = (u,v)$, this procedure {\em recolors} some edges so as to {\em free} a color for the uncolored edge $e$, and then colors $e$ with that color.
%The runtime of this procedure is $O(n)$ as we are going to explain now briefly, and thus, the total runtime for coloring all the edges is bounded by $O(mn)$.
%\todo{I don't know how much we need to explain abut the fans and the alternating paths here}
Procedure \texttt{Color-One-Edge} is carried out by (1) creating a \EMPH{fan} $F$ centered at one of $e$'s endpoints $u$ and {\em primed} by some color $c_{1}$,\footnote{Such a fan is a star rooted at $u$ that spans some of $u$'s neighbors, with specific conditions on the colors of the edges of this star and the missing colors on their vertices.
Refer to Section \ref{sec:prel} for the definition of this and all other notions used throughout.)} and (2) a simple \EMPH{maximal alternating path} $P$ starting at $u$ with edges colored by the primed color $c_{1}$ and another color $c_{0}$ that is {\em missing} (i.e., not occupied) on $u$. 
%Using $F$ and $P$, Procedure \texttt{Color-One-Edge} recolors some edges and colors $(u,v)$ in a way that "moves" colors from edge to another until it gets to the last edge that can be easly colored by $c_{0}$ or $c_{1}$ without more recolors.
%The size of $F$ is at most the degree $\de(u)$ of $u$
%because it consists only from neighbors of $u$, 
%and the length of $P$ is at most $n-1$. 
The runtime of Procedure \texttt{Color-One-Edge} is linear in the size of $F$ and the length of $P$, 
%which are at most the degree $\de(u)$ of $u$ and $n-1$, respectively, 
hence it is $O(|F|+|P|)=O(\de(u)+n)=O(n)$, where $\de(u)$ denotes the degree of $u$.
Applying this procedure iteratively for all edges in the graph leads to a runtime of $O(m n)$.

%Instead of extending the coloring one edge at a time, 
Instead of coloring one   edge at a time
via Procedure \texttt{Color-One-Edge},
Gabow et al.\ \cite{Gabow85} proposed a different approach, which uses a more complex procedure, \texttt{Parallel-Color}, for coloring {\em multiple} uncolored edges at a time. 
%Without getting into details, 
Procedure \texttt{Parallel-Color} 
%is more compicated and uses two types of fans. We won't get down to much for the details of this algorithm here, and we just denote that the algorithm 
chooses a color $c$ and colors, in $O(m)$ time, a constant fraction of the uncolored edges incident to vertices on which color $c$ is missing.
By applying Procedure \texttt{Parallel-Color} iteratively $O(\Delta\log n)$ times, all edges can be colored in $O(m \Delta \log n)$ time. 
This $O(m\Delta\log n)$-time algorithm, hereafter   \texttt{Low-Degree-Color},
is fast only for graphs of small maximum degree. 
%This result is very good for graphs with small maximum degree and thus we can call 
%However, when $\Delta$ is large it is possible to get better result.

Gabow et al.\ \cite{Gabow85} also gave a simple recursive algorithm, hereafter \texttt{Recursive-Color-Edges}, which first partitions the edges of the graph into two separate subgraphs of maximum degree $\le \Delta/2 + 1$, 
%when the maximum degrees of those subgraphs is about half from the maximum degree of the original graph. Then, 
then recursively computes a $(\Delta/2 +2)$-coloring in each subgraph via \texttt{Recursive-Color-Edges}, and it combines these two colorings into a single $(\Delta+4)$-coloring of the entire graph. Next, the algorithm uncolors all edges in the three smallest color class (of size $O(\frac{m}{\Delta})$),
and finally all uncolored edges are then colored 
in time $O(\frac{mn}{\Delta})$
by applying the aforementioned  \texttt{Color-One-Edge}. The recursion bottoms when the maximum degree is small enough ($\Delta \le \sqrt{\frac{n}{\log n}}$), and then Algorithm  \texttt{Low-Degree-Color} is applied in order to color the remaining subgraphs.
As the term of $O(\frac{mn}{\Delta})$ grows geometrically with the recursion levels and as
the recursion bottoms at maximum degree  $O(\sqrt{\frac{n}{\log n}})$,
the total runtime of \texttt{Recursive-Color-Edges}
is $O(m\sqrt{n\log n})$.

Sinnamon \cite{Sin} 
%made some small modifications to 
%modified Algorithm \texttt{Recursive-Color-Edges}  %and via some clever modifications in the 
obtained a runtime of  $O(m\sqrt{n})$ time.
He achieved this result via two algorithms: A deterministic algorithm, which is similar to that of Gabow et al., and a 
%new ve of this algorithm that uses
much simpler and elegant randomized variant, which we briefly describe next.
First, Sinnamon devised a simple random version of Procedure \texttt{Color-One-Edge}, where the only difference from the deterministic procedure is that the uncolored edge $e = (u,v)$, the endpoint $u$ of $e$, and the missing color of $u$ --- are all chosen {\em randomly} rather than {\em arbitrarily}.
%\todo{u is also chosen randomly from the two vertices of e}
Sinnamon observes that the runtime of this random procedure is not just $O(n)$ as before, but it is also 
%be proved that its expected time is 
bounded by $O(\frac{m\Delta}{l})$ {\em in expectation}, where $l$ is the number of uncolored edges in the graph.
Indeed, recall that the runtime of this procedure is linear in the size of the fan $F$ and the length of the path $P$. Clearly, the size of a fan is bounded by the maximum degree $\Delta = O(\frac{m\Delta}{l})$. The main observation is that one can bound the expected length of the path $P$ 
by $O(\frac{m\Delta}{l})$
as follows: (i) as each edge can be in at most $\Delta+1$ such paths, the sum of lengths of all the possible maximal alternating paths is bounded by $m(\Delta+1)=O(m\Delta)$, and (ii) the probability of the algorithm choosing any particular path is bounded by $\frac{1}{l}$.
%which in total yields that the expected length of the chosen path is bounded by .

By iteratively applying this randomized procedure \texttt{Color-One-Edge}, the expected runtime of coloring all edges %\texttt{Color-One-Edge} 
is bounded by $\sum_{i=1}^{m}O(\frac{m\Delta}{m+1-i})=O(m\Delta\sum_{i=1}^{m}\frac{1}{m+1-i})=O(m\Delta\log n)$. 
This provides a much simpler randomized substitute for 
Algorithm \texttt{Low-Degree-Color} by \cite{Gabow85}, which Sinnamon then applied in conjunction with  
Algorithm \texttt{Recursive-Color-Edges} by \cite{Gabow85} and some small tweaks to obtain an elegant randomized algorithm with expected runtime $O(m \sqrt{n})$.
%which means that we can use this vesion instead of both \texttt{Color-One-Edge} and \texttt{Parallel-Color} and expect to get the best time between those algorithms.

%That directly implies that we can achive the results of $O(m\Delta\log n)$ and $O(m\sqrt{n})$
%by using the random \texttt{Color-One-Edge} iteratively or recursively as needed to get the same expected time, and Sinnamon also shows that this bound holds with high probability.

\subsubsection{Our approach}
As mentioned, the goal of this work
%is we are using a new version of this random algorithm in order to generalize the results above to graphs with low arboricity and specifically, 
is to improve the longstanding runtime bound for $(\Delta+1)$-coloring \cite{Gabow85,Sin} by a factor of $\frac{\alpha}{\Delta}$, where $\alpha$ is the graph's arboricity. 
While the required number of colors ($\Delta$ or $\Delta+1$) grows linearly with the maximum degree $\Delta$, 
the runtime might not need to grow too. 
The main conceptual contribution of this work is in unveiling this rather surprising phenomenon --- \EMPH{the running time shrinks as the maximum degree grows} (provided that the arboricity does not grow together with $\Delta$).

%As discussed above, this time bound is obtained via two algorithms, one of them leads to 
%(however, we are not trying to get rid of the poly logarithmic factors in this paper as Sinnamon, and thus, we still have the $\sqrt{\log n}$ term in the expectation and some $O(\log n)$ factor for the high probability bound, which someone might be able to remove using careful analisis).
In low arboricity graphs, although some vertices may have high degree (close to $\Delta$), most vertices have low degree (close to $\alpha$). Similarly, defining the degree (or {\em weight}) $w(e)$ of an edge $e$ as the minimum degree of its endpoints --- although some edges may have large weight, most edges have low weight (as the sum of edge weights is known to be $O(m \alpha)$). 
Our key insight is that
giving precedence to low degree vertices and edges over high degree ones gives rise to significant improvements in the running time, for all graphs with $\alpha \ll \Delta$. 
%and using the fact that the \EMPH{weight} of any graph $G$, which defined as the sum of all the weights of the edges in $G$, is bounded by $m\alpha$.

%\paragraph{change:}
%Let as focus on the $O(\frac{m\Delta}{l})$ bound of the random \texttt{Color-One-Edge} algorithm.
%First, 

% \paragraph{A density-sensitive \texttt{Low-Degree-Color} algorithm.~}
\medskip
\noindent \textbf{A density-sensitive \texttt{Low-Degree-Color} algorithm.}
Recall that the only difference between Sinnamon's 
\texttt{Color-One-Edge} procedure
and the deterministic one by \cite{Gabow85} is that the uncolored edge $e= (u,v)$, the endpoint $u$ of $e$, and the missing color of $u$ are all chosen {\em randomly} rather than {\em arbitrarily}. 
%We note that the endpoint $u$ of the randomly chosen edge $e$ is chosen {\em arbitrarily}.\todo{randomly from the two endpoints of e}
We further modify the randomized procedure of \cite{Sin} by actually \EMPH{making one of these random choices deterministic} --- 
%as follows. given an uncolored edge $e$, we are letting 
we choose $u$ as the endpoint of $e$ of minimum degree, so that $\de(u) = w(e)$. In this way the expected size of the fan computed by the algorithm is bounded by the expected weight of the sampled edge, which, in turn, is the total weight of all uncolored edges divided by their number, yielding an upper bound of $O(\frac{m\alpha}{l})$ --- which refines the aforementioned bound of $O(\frac{m\Delta}{l})$ by \cite{Sin}.
%Using this refined bound, 
%allows us to obtain a strong upper
%the next step is to bound the expected length of an alternating path, 
Next, we would like to argue that the expected length of a maximal alternating path $P$ is $O(\frac{m\alpha}{l})$. While the argument for bounding the probability that the algorithm chooses any particular path is 
$O(\frac{1}{l})$ remains pretty much the same, the challenge is in bounding the
sum of lengths of all maximal alternating paths by $O(m\alpha)$ rather than $O(m \Delta)$.
%and bounding the probability to choose any of them bt $O(\frac{1}{l})$.
Our key insight here is 
%However, it is not immediate that the total sum of the lengths is bounded by $O(m\alpha)$ instead of $O(m\Delta)$. In order to get this bound, we are 
to bound the total lengths of the \EMPH{internal} parts of the paths, by exploiting the observation that \EMPH{any edge $e$ can be internal in at most $w(e)$ maximal alternating paths}, which directly implies that
%and thus we get thus this s
the sum of lengths is bounded by the sum of the weights of the colored edges, which, in turn is bounded by $O(m\alpha)$.
As a result, we improve the runtime of Procedure \texttt{Low-Degree-Color} from $O(m\Delta\log n)$ to $O(m\alpha\log n)=O(m\Delta\log n)\cdot\frac{\alpha}{\Delta}$. Refer to Section \ref{sec:basicalg} for the full details.

% \paragraph{A density-sensitive \texttt{Recursive-Color-Edges} algorithm.~}
\medskip
\noindent \textbf{A density-sensitive \texttt{Recursive-Color-Edges} algorithm.}
The aforementioned improvement provided by our density-sensitive
\texttt{Low-Degree-Color} algorithm
only concerns the regime of low $\Delta$.
To shave off a factor of $\frac{\alpha}{\Delta}$ in the entire regime of $\Delta$, we need to come up with a density-sensitive  \texttt{Recursive-Color-Edges} algorithm --- which is the main technical challenge of this work.

In the original recursive algorithm of Gabow et al.\ \cite{Gabow85}, the starting point is that any graph can be partitioned in linear time into two subgraphs in which the degrees of all vertices, including the maximum degree, reduce by a factor of 2.
%, and in particular the maximum degree reduces by a. is that a basic partitions the edges of the graph into two separate subgraphs of maximum degree $\le \Delta/2 + 1$, 
%when the maximum degrees of those subgraphs is about half from the maximum degree of the original graph. Then, 
%then recursively computes a $(\Delta/2 +2)$-coloring in each subgraph via \texttt{Recursive-Color-Edges}, and it combines these two colorings into a single $(\Delta+4)$-coloring of the entire graph.
To achieve a density-sensitive refinement of Algorithm \texttt{Recursive-Color-Edges}, the analog starting point would be to partition the graph into two subgraphs of half the arboricity. Although this guarantee can be achieved via a random partition, we overcome this issue deterministically, by \EMPH{transitioning from the graph's arboricity to the graph's normalized weight}, where the {\em normalized weight} of the graph is defined as the ratio of the graph's weight (the sum of its edge weights) to the number of edges.
%do not rely on the graph's arboricity, but rather on the sum of edge weights}.
Indeed, since the degrees of vertices decay by a factor of 2 at each recursion level, so does the weight of each subgraph.
An added benefit of this approach is that all our results extend from graphs of arboricity $\alpha$ to graphs of normalized weight $2\alpha$, which forms a much wider graph class.\footnote{Any graph of arboricity $\alpha$ has normalized weight at most $2\alpha$ (refer to Claim \ref{cl:basearb}), but a graph of normalizd weight $\alpha$ may have arboricity much larger than $\alpha$.}

In the original recursive algorithm of \cite{Gabow85}, to transition from a $(\Delta+4)$-coloring to a $(\Delta+1)$-coloring, the algorithm uncolors all edges in the three color classes of smallest size,  and then colors all those $O(\frac{m}{\Delta})$ edges by applying the \texttt{Color-One-Edge} procedure, which takes time $O(\frac{m n}{\Delta})$.
%We observe that similarly to the degrees of vertices, the weights of edges also decay by a factor of 2 at each recursion level.
To outperform the original algorithm, our algorithm uncolors all edges in the three color classes of smallest \EMPH{weight} (rather than size), implying that the total weight of those uncolored edges is bounded by $O(\frac{m\alpha}{\Delta})$.
As a result, we can no longer argue that the total number of those edges is $O(\frac{m}{\Delta})$ as before, but we demonstrate that the new upper bound on the weights gives rise to a faster algorithm. 

We improve the bound for coloring the uncolored edges at every level of recursion, from $O(\frac{mn}{\Delta})$ to $O(\frac{mn}{\Delta}) \cdot \frac{\alpha}{\Delta}$, as follows.
%Then we are bounding the time being spent for coloring those edge in two parts.
First, consider the time spent on uncolored edges of \EMPH{high weights}, i.e., edges of weight $> \Delta/2$. Since the number of such edges is bounded by $O(\frac{m \alpha}{\Delta}) / (\Delta/2) = O(\frac{m}{\Delta})\cdot \frac{\alpha}{\Delta}$, the simple $O(n)$ time bound of \texttt{Color-One-Edge} by \cite{MG92} yields the required total runtime bound of 
$O(\frac{mn}{\Delta}) \cdot \frac{\alpha}{\Delta}$.
%$O(\frac{mn\cdot\alpha}{\Delta\cdot\Delta})$.

The challenging part is to bound the time spent on uncolored edges of \EMPH{low weights}, i.e., edges of weight  $\le \Delta/2$.
Towards this end, we employ a stronger version of the bound on the expected runtime of \texttt{Color-One-Edge}. Instead of using the $O(\frac{m\alpha}{l})$ bound mentioned before, we build on the fact that any chosen endpoint of a low weight edge has $\Omega(\Delta)$ possible missing colors to choose from, which reduces the probability to choose any alternating path by a factor of $\Delta$. With a bit more work, we can show that the expected time for coloring a low weight uncolored edge is bounded by $O(\frac{m\alpha}{l\cdot\Delta})$, yielding a total runtime bound of at most $O(\sum_{i=1}^{m}\frac{m\alpha}{(m+1-i)\cdot\Delta})
%=O(\frac{m\alpha}{\Delta}\sum_{i=1}^{m}\frac{1}{m+1-i})
=O(\frac{m\alpha}{\Delta}\log n)$ for all the low weight uncolored edges.
Although this upper bound is better than the required one by factor of $\frac{n}{\Delta}$, the bottleneck stems from the edges of high weights.
%need to have $\Omega(\Delta)$ possible missing colors to choose from for any chosen endpoint on a low weight edge.
%which is (except for the $\log n$ factor) smaller than $O(\frac{mn\cdot\alpha}{\Delta\cdot\Delta})$ as we want.

As we demonstrate in Section \ref{sec:advanced}, by combining these ideas with a careful analysis, our randomized \texttt{Recursive-Color-Edges}
Algorithm achieves the required runtime bound of $O(m\sqrt{n\log n})\cdot\frac{\alpha}{\Delta}$. 

\subsection{Related Work}
Edge coloring is a fundamental graph problem that has been studied over the years
in various settings and computational models;
%, especially algorithms that achieve a sub-optimal number of colors.
we shall aim for brevity.
Surprisingly perhaps, the body of work on fast sequential $(\Delta+1)$-coloring algorithms, which is the focus of this work, is quite sparse. 
We have already discussed in detail the $(\Delta+1)$-coloring algorithms of \cite{Gabow85,MG92,Sin}; a similar result to Gabow et al.\ \cite{Gabow85} was obtained independently (and earlier) by \cite{arjomandi1982efficient}.
It is NP-hard to determine whether a graph can be colored with $\Delta$ colors or not \cite{holyer1981np}.
On the other hand, for restricted graph families, particularly biparite graphs and planar graphs of sufficiently large degree, 
near-linear time algorithms are known
%Using euler partitions to edge color bipartite multigraphs 
\cite{gabow1976using,cole2008new},\cite{chrobak1990improved}.
%, there is not a large body of work
%In planar graphs
%Finding optimal edge coloring of planar graph in $O(n\log n)$ time 
Recently, $(\Delta+1)$-coloring algorithms across different models of computation were given in \cite{Christiansen22}.

There are also known edge coloring algorithms for bounded arboricity graphs, but they are slow or they use many colors.
In graphs of {\em constant} arboricity $\alpha$,  
a near-linear time 
$(\lceil \frac{(\alpha+2)^2}{2}\rceil - 1)$-coloring
algorithm 
 was given in \cite{zhou1994edge}.
For graphs whose arboricity $\alpha$ is smaller than  $\Delta$ by at least a constant factor,
several $\Delta$-coloring algorithms
are known \cite{fiorini1975some,haile1999bounds,sanders2002size,woodall2007average,woodall2008average,cao2019average},
 but their running time is at least $O(m n)$.

\subsection{Subsequent Work}
Kowalik \cite{kowalik2024edge} gave a $\Delta$-edge coloring algorithm with a running time of $\tilde O(m \alpha^3)$,
for graphs whose arboricity $\alpha$ is smaller than  $\Delta$ by at least a constant factor.\footnote{More precisely, the result holds for graphs whose {\em maximum average degree} is at most $\Delta/2$; the maximum average degree is within a constant factor from the graph's arboricity.}
The work of \cite{kowalik2024edge} crucially builds on top of our techniques; indeed, without relying on our techniques, the running time of the algorithm of \cite{kowalik2024edge} would grow to $\tilde{O}(m \Delta^3)$, which is $\tilde{O}(m  n^3)$ in bounded arboricity graphs of large degree; this bound, in turn, is slower by a factor of $n^2$ than some of the aforementioned $\Delta$-edge coloring algorithms in bounded arboricity graphs \cite{fiorini1975some,haile1999bounds,sanders2002size,woodall2007average,woodall2008average,cao2019average}. 

For a dynamically changing graph with maximum degree $\Delta$
and arboricity $\alpha$,
algorithms for maintaining a $(\Delta +O(\alpha))$-coloring
with update time $\tilde{O}(1)$ were given independently in \cite{CRV23} and \cite{BCPS23}.

Very recently, Bhattacharya et al.~\cite{BhattacharyaCCSZ24} and Assadi~\cite{Assadi24} independently obtained faster algorithms for $(\Delta + 1)$-edge coloring in general graphs, obtaining running times of $\tilde O(mn^{1/3})$ and $\tilde O(n^2)$ respectively.

\section{Preliminaries} \label{sec:prel}

We work in the standard word RAM model of computation, with words of size $w := \Theta(\log n)$. In particular, we can index any of the $2^{O(w)} = \poly(n)$ memory addresses, perform basic
arithmetic on $w$-bit words, and sample $w$-bit random variables, all in constant time.
In App.\ \ref{Data Sets}, we describe the data structures used by our algorithms; they are all basic and easy to implement, and their space usage is linear in the graph size.

We denote the degree of a vertex $v$ by $\de(v)$.
The \EMPH{weight $w(e)$} of an edge $e = (u,v)$ is defined as the minimum degree of its   endpoints, i.e., 
    $w(e):=\min\{\de(u),\de(v)\}$.
    The \EMPH{weight $w(G)$} of a graph $G = (V,E)$ is defined as the sum of weights over its edges, i.e., 
    $w(G) :=\sum_{e \in E} w(e)$.

Note that the weight $w(G)$ of any $m$-edge graph $G$ satisfies $w(G) \ge m$.
The following claim, due to \cite{ChibaN85}, asserts that the weight of any $m$-edge graph exceeds $m$ by at most a factor of $2\alpha$.
\begin{claim} [Lemma 2 in 
\cite{ChibaN85}] \label{cl:basearb}
    For any $m$-edge graph $G$ with arboricity $\alpha$, we have  $w(G)\le 2m\alpha$.
\end{claim}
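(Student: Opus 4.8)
The plan is to prove the claim via a low-out-degree orientation of $G$. First I would invoke the Nash--Williams theorem (already quoted above) to decompose the edge set $E$ into $\alpha$ edge-disjoint forests $F_1, \dots, F_\alpha$. In each forest $F_i$, I would root every tree arbitrarily and orient each edge from the child endpoint towards the parent endpoint (i.e.\ towards the root of its tree). Under this orientation every non-root vertex of $F_i$ has exactly one outgoing edge and every root has none, so each vertex has out-degree at most $1$ in $F_i$. Superimposing the $\alpha$ orientations yields an orientation $\vec{G}$ of $G$ in which every vertex has out-degree at most $\alpha$.

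The second step is a charging argument. Fix this orientation $\vec{G}$, and for an edge $e = (u,v) \in E$ let $t(e) \in \{u,v\}$ denote its tail. Since $w(e) = \min\{\de(u), \de(v)\} \le \de(t(e))$, I would charge the weight of $e$ to its tail and bound
\[
w(G) \;=\; \sum_{e \in E} w(e) \;\le\; \sum_{e \in E} \de(t(e)) \;=\; \sum_{v \in V} \de(v) \cdot \de_{\out}(v),
\]
where $\de_{\out}(v)$ is the out-degree of $v$ in $\vec{G}$. Using $\de_{\out}(v) \le \alpha$ for every $v$ together with the handshake identity $\sum_{v} \de(v) = 2m$, the right-hand side is at most $\alpha \sum_{v} \de(v) = 2m\alpha$, which is exactly the claimed bound.

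There is essentially no serious obstacle here; the only point requiring a little care is the direction of the orientation within each forest (towards the roots, not away from them), which is what keeps the per-forest out-degrees bounded by $1$ — the inequality $w(e) \le \de(t(e))$ itself holds for an \emph{arbitrary} orientation, so all we actually need from the arboricity hypothesis is the existence of an orientation with maximum out-degree at most $\alpha$. As an alternative to routing through Nash--Williams, one could instead cite the folklore equivalence between bounded arboricity and the existence of a bounded-out-degree orientation and start directly from such an orientation, but the forest-decomposition route keeps the argument self-contained.
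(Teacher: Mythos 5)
Your proof is correct. The paper itself does not prove this claim but simply cites Lemma 2 of \cite{ChibaN85}, and your argument --- decomposing into $\alpha$ forests, orienting toward the roots to get out-degree at most $\alpha$, and charging each edge's weight to its tail so that $w(G) \le \sum_{v} \de(v)\cdot \alpha = 2m\alpha$ --- is essentially the same charging argument as in that cited lemma.
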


In what follows, let $G=(V,E)$ be an arbitrary $n$-vertex $m$-edge graph, and we let $\Delta$ to the maximum degree of $G$.
For any integer $k \ge 1$, let $[k]$ denote the set $\{1,2,...,k\}$.

\begin{definition}
    A \EMPH{(proper) partial k-(edge-)coloring} $\chi$ of $G$ is a color function $\chi:E\rightarrow [k]\cup \{\neg\}$ such that any two distinct colored edges $e_1, e_2$ that share an endpoint do not receive the same color. An edge $e$ with $\chi(e)\in [k]$ is said to be \EMPH{colored} (by $\chi$), whereas an edge $e$ with $\chi(e)=\neg$ is said to be \EMPH{uncolored}.
    If all the edges in $G$ are colored by $\chi$, we say that $\chi$ is a \EMPH{(proper) $k$-(edge-)coloring}.
\end{definition}
 
    Given a partial $k$-coloring $\chi$ for $G$, we define
    \EMPH{$M(v)$} as the set of \EMPH{missing} colors of $v$,
    i.e., the set of colors in the color palette $[k]$ not occupied
    by any of the incident edges of $v$.
For a partial ($\Delta+1$)-coloring, $M(v)$ is always nonempty, as $v$ has at most $\Delta$ neighbours and there are $\Delta + 1$ colors to choose from.

\begin{definition} [Colored and uncolored edges]
    Consider a partial $k$-coloring $\chi$, where all the edges in $G$ are colored but $l$. We define \EMPH{$\ecl(G,\chi) =  \ecl$} as the set of $m - l$ colored edges of $G$ and \EMPH{$\eun(G,\chi) = \eun$} as the set of $l$ uncolored edges of $G$. If all edges of $G$ are uncolored, i.e., $l=m$,
    we say that $\chi$ is an \EMPH{empty} coloring.
\end{definition}

\subsection{Fans}
\label{Fans}

In what follows we let $\chi$ be a proper partial ($\Delta+1$)-coloring of $G$.

\begin{definition}
    A \EMPH{fan} $F$ is a sequence of vertices $(v,x_{0},...,x_{t})$ such that $x_{0},...,x_{t}$ are distinct neighbors of v, the edge $(v,x_{0})$ is uncolored, the edge $(v,x_{i})$ is colored for every $i \in [t]$, and the color $\chi(v,x_{i})$ is missing at vertex $x_{i-1}$ for every $i \in [t]$.
    % \begin{itemize}
    %     \item $x_{0},...,x_{t}$ are distinct neighbors of v;
    %     \item Edge $(v,x_{0})$ is uncolored;
    %     \item For every $i=1,...,t$: the edge $(v,x_{i})$ is colored;
    %     \item For every $i=1,...,t$: the color $\chi(v,x_{i})$ is missing at   vertex $x_{i-1}$.
    % \end{itemize}
    The vertex $v$ is called the {\em center} of $F$, and $x_{0},...,x_{t}$ are called the {\em leaves} of $F$.
\end{definition}

% \noindent
Refer to Figure \ref{fig:fan} in App.\ \ref{app:figs} for an illustration of a fan.
The useful property of a fan is that \EMPH{rotating} or \EMPH{shifting} the colors of the fan preserves the validity of the coloring.
Let $F=(v,x_{0},...,x_{t})$ be a fan and write $c_{i} =\chi(v,x_{i})$, for each $i=1,...,t$.
To \EMPH{rotate} or \EMPH{shift} $F$ from $x_{j}$ means to set $\chi(v,x_{i-1})=c_{i}$ for every $i=1,...,j$ and make $(v,x_{j})$ uncolored.
After the rotation or shift, the function $\chi$ is still a proper partial coloring, but now $(v,x_{j})$ is uncolored instead of $(v,x_{0})$.
Note that $M(v)$ is unaffected by the shift.
Refer to Figure \ref{fig:fanShift} in App.\ \ref{app:figs} for an illustration of fan shifting.

To extend a partial ($\Delta+1$)-coloring
(i.e., increase the number of colored edges),
we shall focus on {\em maximal} fans (which cannot be further extended), using the following definition.

\begin{definition}
    A fan $F=(v,x_{0},...,x_{t})$ is said to be \EMPH{primed} by color $c_{1}\in M(x_{t})$ if one of the following two conditions hold:
    (1) $c_{1}\in M(v)$, or
    (2) $c_{1}\in M(x_{j})$ for some $j<t$.
\end{definition}

% \paragraph{Computing a primed fan.}
% \medskip
\noindent \textbf{Computing a primed fan.}
Given an uncolored edge $(v,x_{0})$, we can get a primed fan $(v,x_{0},...,x_{t})$ via \Cref{alg:makefan} (refer to \cite{Gabow85,MG92,Sin}):

\begin{algorithm}[h]
    \SetAlgoLined
    \DontPrintSemicolon
    \KwIn{A graph $G$ with a partial ($\Delta+1$)-coloring $\chi$ and an uncolored edge $(v,x_{0})$}
    \KwOut{A fan $F=(v,x_{0},...,x_{t})$ and color $c_{1}\in M(x_{t})$ such that $F$ is primed by $c_{1}$}
    $F\leftarrow (v,x_{0})$\;
    $t\leftarrow 0$\;
    \While{$F$ is not primed}{
        Pick any $c_{1} \in M(x_{t})$\;
        \If{$c_{1} \in M(v)$}{
            \Return{$F,c_{1}$}
        }
        \Else {
            Find the edge $e=(v,x_{t+1})$ such that $\chi(e)=c_{1}$\;
            \If{$x_{t+1}\in {x_{1},...,x_{t}}$}{
                \Return{$F,c_{1}$}
            }
            \Else{
                Append $x_{t+1}$ to $F$\;
                $t\leftarrow t+1$\;
            }
        }
    }
    \caption{\texttt{Make-Primed-Fan}$(G,\chi,(v,x_{0}))$}
    \label{alg:makefan}
\end{algorithm}

\begin{lemma}
\label{making fan bound}
   Algorithm \texttt{Make-Primed-Fan} returns a primed fan with center $v$ in $O(\de(v))$ time.
\end{lemma}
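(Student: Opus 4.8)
The plan is to carry a loop invariant through the execution of \texttt{Make-Primed-Fan} and then read off both the correctness of the output and the running time from it. The invariant I would maintain is: at the top of every iteration of the \textbf{while} loop, $F=(v,x_{0},\dots,x_{t})$ is a legal fan centered at $v$ whose leaves $x_{0},\dots,x_{t}$ are pairwise distinct neighbours of $v$, and, whenever $c_{1}$ has been assigned, $c_{1}\in M(x_{t})$.

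\emph{Invariant and correctness of the output.} The base case $F=(v,x_{0})$, $t=0$ is immediate, since $(v,x_{0})$ is uncolored and the remaining fan conditions, which are indexed by $i\in\{1,\dots,t\}$, are vacuous when $t=0$. For the inductive step, $F$ changes only when a leaf $x_{t+1}$ is appended, which happens only after the algorithm (a) picks $c_{1}\in M(x_{t})$, which is possible because $M(\cdot)$ is nonempty for a partial $(\Delta+1)$-coloring; (b) fails the test $c_{1}\in M(v)$; (c) locates the edge $e=(v,x_{t+1})$ with $\chi(e)=c_{1}$, which exists because $c_{1}\notin M(v)$ and is unique because $\chi$ is proper; and (d) passes the test $x_{t+1}\notin\{x_{1},\dots,x_{t}\}$. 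To see that the extended sequence is still a legal fan I additionally need $x_{t+1}\neq x_{0}$, which holds because $(v,x_{0})$ is uncolored while $(v,x_{t+1})$ is colored, so these are distinct edges at $v$; and then $\chi(v,x_{t+1})=c_{1}\in M(x_{t})$ supplies the last required fan condition. When the algorithm returns: in the first branch $c_{1}\in M(v)\cap M(x_{t})$, so $F$ is primed by condition~(1); in the second branch $x_{t+1}=x_{j}$ for some $j\in\{1,\dots,t\}$, hence $\chi(v,x_{j})=c_{1}$, and by the fan property $\chi(v,x_{j})\in M(x_{j-1})$, so $c_{1}\in M(x_{j-1})$ with $j-1<t$, i.e.\ $F$ is primed by condition~(2).

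\emph{Termination and running time.} Each iteration either returns or increments $t$ while keeping the leaves distinct neighbours of $v$; since $v$ has only $\de(v)$ neighbours, at most $\de(v)-1$ appends can occur, so the loop runs at most $\de(v)$ times and in particular halts. (Concretely, once all neighbours of $v$ appear as leaves the located endpoint $x_{t+1}$ is already a leaf, and since $x_{t+1}\neq x_{0}$ as above and $x_{t+1}\neq x_{t}$ because $c_{1}\in M(x_{t})$ forbids a $c_{1}$-colored edge at $x_{t}$, we get $x_{t+1}\in\{x_{1},\dots,x_{t}\}$ and the second \textbf{return} fires.) Each iteration performs a constant number of primitive operations --- sampling a color of $M(x_{t})$, testing membership in $M(v)$, retrieving the incident edge of $v$ of a given color, testing membership in the leaf set, and an append --- each of which runs in $O(1)$ time with the data structures of App.~\ref{Data Sets} (a color-indexed array of incident edges per vertex, an $O(1)$-query representation of each set $M(\cdot)$, and a marking array for the current leaves). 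Multiplying, the total time is $O(\de(v))$.

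I do not expect a genuine obstacle: the only mildly delicate points are the two disjointness facts that the pseudocode uses but does not explicitly test, namely $x_{t+1}\neq x_{0}$ and $x_{t+1}\neq x_{t}$, and both are one-line consequences of the colored/uncolored dichotomy and of $c_{1}$ being missing at $x_{t}$; the rest is bookkeeping.
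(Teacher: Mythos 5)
Your proposal is correct and follows essentially the same route as the paper's proof: bound the number of while-loop iterations by $\de(v)$ since each non-terminating iteration adds a new distinct neighbour of $v$ as a leaf, and implement each iteration in $O(1)$ time using the data structures of App.~\ref{Data Sets} (with a hash table / marking structure for the leaf-membership test). The paper simply declares the correctness of the returned primed fan ``immediate,'' whereas you spell it out via a loop invariant (including the small disjointness facts $x_{t+1}\neq x_0$ and $x_{t+1}\neq x_t$), which is a welcome but not substantively different elaboration.
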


\begin{proof}
By the description of the algorithm,
it is immediate
that \texttt{Make-Primed-Fan} returns a primed fan.
The number of iterations of the while loop is bounded by $\de(v)$, since every iteration in which the loop does not terminate adds a new neighbor of $v$ as a leaf of $F$.
To complete the proof, we argue that each iteration can be implemented in constant time. 
We can store the at most $\de(v)$ leaves that are added to $F$ in a hash table, so that line 9 can be implemented in constant time.
The remaining part of an iteration can be carried out in constant time in the obvious way using the data structures mentioned in App.\ \ref{Data Sets}.
\end{proof}

\subsection{Alternating Paths}

\begin{definition}
    We say that path $P$ is a \EMPH{$(c_{0},c_{1})$-alternating path}, for a pair $c_0, c_1$ of distinct colors, if $P$ consists of edges with colors $c_{0}$ and $c_{1}$ only.
    We say that a $(c_{0},c_{1})$-alternating path $P$ is  \EMPH{maximal} if $P=(v_{0},v_{1},\ldots,v_{|P|})$ and both $v_{0}$ and $v_{|P|}$ have only one edge colored by $c_{0}$ and $c_{1}$ (hence $P$ cannot be extended further). Although a maximal alternating path may form a cycle, we shall focus on simple paths.
\end{definition}

The useful property of a maximal alternating path is that \EMPH{flipping} the colors of the path edges preserves the validity of the coloring.
That is, let $P = e_{1} \circ \ldots \circ e_{|P|}$ be a maximal $(c_{0},c_{1})$-alternating path such that for every $i=1,\ldots,|P|$: if $i \equiv 1 \pmod{2}$ then $\chi(e_{i})=c_{0}$ and if $i \equiv 0 \pmod{2}$ then $\chi(e_{i})=c_{1}$.
\EMPH{Flipping} $P$ means to set the   function $\chi$, such that for every $i=1,...,|P|$: if $i \equiv 1 \pmod{2}$ then $\chi(e_{i})=c_{1}$ and if $i \equiv 0 \pmod{2}$ then $\chi(e_{i})=c_{0}$;
the resulting function $\chi$ after the flip operation is a proper partial edge-coloring.

Flipping a maximal $(c_{0},c_{1})$-alternating path that starts at a vertex $v$ is a useful operation, as it "frees" for $v$ the color $c_0$ that is occupied by it, replacing it with the missing color $c_1$ on $v$.
Refer to Figure \ref{fig:pathFlip} in App.\ \ref{app:figs} for an illustration of maximal alternating path flipping.

\subsection{
Algorithm \texttt{Extend-Coloring}: Naively Extending a Coloring by One Edge}
 
Given an uncolored edge $e$, there is a standard way to color it and by that to extend the coloring, 
using a primed fan $F$ rooted at one of the endpoints of $e$, say $v$, as well as a maximal alternating path $P$ starting at $v$, by flipping the path and then rotating a suitable prefix of the fan, as described in the following algorithm (refer to 
\cite{Gabow85,MG92,Sin}):
%\cite{MG92}):

\begin{algorithm}[t]
    \SetAlgoLined
    \DontPrintSemicolon
    \KwIn{A graph $G$ with a partial ($\Delta+1$)-coloring $\chi$, an uncolored edge e=$(v,x_{0})$, a fan $F=(v,x_{0},...,x_{t})$ primed by color $c_{1}$ and a maximal $(c_{0},c_{1})$-alternating path $P$ starting at $v$}
    \KwOut{Updated partial ($\Delta+1$)-coloring $\chi$ of $G$, such that e is also colored by $\chi$}
    \If{$c_{1}\in M(v)$}{
        Shift $F$ from $x_{t}$\;
        $\chi(v,x_{t})\leftarrow c_{1}$\;
    }
    \Else{
        $x_{j} \leftarrow$ leaf of $F$ with $\chi(v,x_{j})=c_{1}$ (when $j<t$)\;
        $w \leftarrow$ other endpoint of $P$ (other than $v$)\;
        Flip $P$\;
        \If{$w\ne x_{j-1}$}{
            Shift $F$ from $x_{j-1}$\;
            $\chi(v,x_{j-1})\leftarrow c_{1}$\;
        }
        \Else{
            Shift $F$ from $x_{t}$\;
            $\chi(v,x_{t})\leftarrow c_{1}$\;
        }
           
    }
    \Return{$\chi$}
    \caption{\texttt{Extend-Coloring}$(G,\chi,e,F,c_{1},P)$}
    \label{alg:extendcol}
\end{algorithm}

\begin{lemma}
    \label{Extend bound}
    Algorithm \texttt{Extend-Coloring} (\Cref{alg:extendcol}), when given as input a proper partial $(\Delta+1)$-coloring $\chi$, the first uncolored edge $(v,x_{0})$ of a fan $F$ primed by a color $c_1$ and a maximal $(c_0,c_1)$-alternating path $P$, where $c_0$ is free on $v$,
    extends the coloring $\chi$ into a proper partial $(\Delta+1)$-coloring, such that $e$, as well as all previously colored edges by $\chi$, is also colored by $\chi$.
    Moreover, this algorithm
    takes $O(|F|+|P|)=O(\de(v)+|P|)$ time. 
\end{lemma}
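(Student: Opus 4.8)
The two assertions — correctness and the time bound — I would handle separately. For correctness I must show that the output $\chi$ is a proper partial $(\Delta+1)$-coloring in which $e=(v,x_0)$ is colored and no previously colored edge has become uncolored. The only operations the algorithm performs on $\chi$ are: flipping the maximal $(c_0,c_1)$-alternating path $P$; shifting a prefix of the fan $F$ from one of its leaves; and assigning a color to a single currently-uncolored edge. By the facts recalled in \Cref{Fans} and in the discussion of flipping maximal alternating paths, flipping $P$ keeps $\chi$ a proper partial coloring and only recolors (never uncolors) the edges of $P$; shifting the fan keeps $\chi$ proper, leaves $M(v)$ intact, and has the net effect of coloring $(v,x_0)$ at the expense of uncoloring the leaf edge it is shifted from (with the intervening fan edges merely recolored); and assigning a color $c$ to an uncolored edge $(v,x)$ keeps $\chi$ proper precisely when $c\in M(v)\cap M(x)$. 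So it suffices to verify, in each of the three branches of \Cref{alg:extendcol}, that (a) each fan prefix that gets shifted is genuinely a fan at the moment of the shift, and (b) each edge that gets colored is assigned a color lying in the current $M$-sets of both its endpoints; that $e$ ends colored and nothing else gets uncolored is then clear from the structure of each branch.

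\textbf{A color-avoidance fact, and the easy branch.} The workhorse for (a) is: apart from the edge $(v,x_j)$ — the unique edge colored $c_1$ at $v$, which is a fan edge whenever $c_1\notin M(v)$ (true of the fans returned by \texttt{Make-Primed-Fan}, \Cref{alg:makefan}, and of any maximal primed fan) — every fan edge $(v,x_i)$ has a color outside $\{c_0,c_1\}$, since $c_0$ is free on $v$ and $c_1$ appears at $v$ only on $(v,x_j)$. As flipping $P$ only toggles, at each vertex, the presence of the two colors $c_0$ and $c_1$, it cannot break any condition $\chi(v,x_i)\in M(x_{i-1})$ with $i\neq j$. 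Now if $c_1\in M(v)$: since $F$ is primed, also $c_1\in M(x_t)$; shifting $F$ from $x_t$ keeps $c_1\in M(v)$ and only enlarges $M(x_t)$ (it frees $(v,x_t)$), so $\chi(v,x_t)\leftarrow c_1$ is legal and colors $e$.

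\textbf{The branch $c_1\notin M(v)$.} Now $P$ starts with the edge $(v,x_j)$ of color $c_1$, so flipping $P$ recolors that edge to $c_0$ and puts $c_1$ back in $M(v)$. Let $w$ be the other endpoint of $P$. If $w\neq x_{j-1}$: by the workhorse fact the prefix $(v,x_0,\dots,x_{j-1})$ is still a fan after the flip, and $x_{j-1}\notin P$ — it is neither $v$ nor $w$, nor interior to $P$ (as $c_1\in M(x_{j-1})$ forbids incident $c_0$- and $c_1$-edges) — so $c_1$ stays in $M(x_{j-1})$ throughout the flip and the ensuing shift; hence shifting $F$ from $x_{j-1}$ followed by $\chi(v,x_{j-1})\leftarrow c_1$ is legal and colors $e$. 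If $w=x_{j-1}$: the single $P$-edge at $w=x_{j-1}$ has color $c_0$ before the flip (once more since $c_1\in M(x_{j-1})$), so after the flip $c_0\in M(x_{j-1})$, which is exactly the condition making the whole fan $(v,x_0,\dots,x_t)$ valid at index $j$ (its edge $(v,x_j)$ now being colored $c_0$); shifting $F$ from $x_t$ followed by $\chi(v,x_t)\leftarrow c_1$ is then legal, using $x_t\neq w$ (since $j-1<t$) so that $c_1$ remains in $M(x_t)$. This completes the correctness argument.

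\textbf{Running time, and the crux.} Each branch flips $P$ at most once ($O(|P|)$ time), shifts a prefix of $F$ at most once ($O(|F|)$ time), locates the leaf $x_j$ by scanning $F$ and the endpoint $w$ of $P$ ($O(|F|)$ and $O(|P|)$ time), and does a constant number of additional single-edge recolorings and look-ups, each in $O(1)$ time via the data structures of App.\ \ref{Data Sets}; since $x_0,\dots,x_t$ are distinct neighbors of $v$ we have $|F|=O(\de(v))$, so the total is $O(|F|+|P|)=O(\de(v)+|P|)$. The step I expect to be the main obstacle is item (a) in the branch $c_1\notin M(v)$: showing that flipping $P$ does not invalidate the fan that is about to be shifted, and that the color finally assigned is still missing at the leaf receiving it. This is exactly where the color-avoidance fact and the case split $w=x_{j-1}$ versus $w\neq x_{j-1}$ (with the corresponding choice of shifting from $x_{j-1}$ or from $x_t$) are needed.
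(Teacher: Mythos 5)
Your proposal is correct and, on the part the paper actually proves, takes the same route: identify the case, locate $x_j$ and the endpoint $w$ in $O(|F|+|P|)$ time, then perform at most one path flip, one fan shift and one edge coloring, and use $|F|\le \de(v)$ to get $O(\de(v)+|P|)$. For correctness the paper deliberately omits the argument and defers to the references cited there; the argument you supply (the colors $\chi(v,x_i)\notin\{c_0,c_1\}$ for $i\ne j$ are untouched by the flip, plus the case split $w=x_{j-1}$ versus $w\ne x_{j-1}$ with the corresponding choice of shifting from $x_{j-1}$ or $x_t$) is precisely the standard Vizing/Misra--Gries proof, so there is no gap.
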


\begin{proof}
This algorithm provides the standard way to extend a coloring by one edge. (We omit the correctness proof for brevity; refer to
\cite{Gabow85,MG92,Sin} for the proof.)

As for the running time, there are three different cases to consider: (1) $c_{1}\in M(v)$, (2) $c_{1}\notin M(v)$ and $w\ne x_{j-1}$, 
(3) $c_{1}\notin M(v)$ and $ w= x_{j-1}$. Using our data structures (described in App.\ \ref{Data Sets}), one can identify the case and find the leaf $x_{j}$ and the endpoint $w$ (if needed) in time $O(|F|+|P|)$.
    Beyond that, in each case the algorithm performs at most one path flip, one fan shift, and one edge coloring, which also takes time $O(|F|+|P|)$.
    As $F$ contains at most $\de(u)$ leaves,
    the total running time is $O(|F|+|P|) = O(\de(v)+|P|)$.
\end{proof}

\section{A $(\Delta+1)$-Coloring Algorithm with Runtime $\tilde{O}(m \alpha)$} \label{sec:basicalg}

\subsection{Internal Edges of Maximal Alternating Paths} 

The paths that we shall consider are  maximal alternating paths that start and finish at different vertices (i.e., we do not consider cycles).
A vertex $v$ in a path $P$ is called \EMPH{internal} if it is not one of the two endpoints of $P$.
An edge $e = (u,v)$ of path $P$ is called \EMPH{internal} if both $u$ and $v$ are internal vertices of $P$.
For a path $P$, denote by 
    $I(P)$ the set of internal edges of $P$. 
We shall use the following immediate observation later. 
\begin{observation} \label{Path Bound}
    For any path $P$,
    $|P| \le |I(P)|+2$.
\end{observation}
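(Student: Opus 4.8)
The plan is to work directly from the definition of an internal edge, using the standing assumption (recorded just before the observation) that we only consider simple maximal alternating paths that are not cycles, so $P = (v_0, v_1, \ldots, v_{|P|})$ has two \emph{distinct} endpoints $v_0$ and $v_{|P|}$. I would write the $|P|$ edges of $P$ as $e_i = (v_{i-1}, v_i)$ for $i = 1, \ldots, |P|$, and by definition the internal vertices of $P$ are exactly $v_1, \ldots, v_{|P|-1}$, so an edge $e_i$ is \emph{non-internal} precisely when one of its endpoints is $v_0$ or $v_{|P|}$.

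The one step that actually uses a hypothesis is the following: since $P$ is simple, the vertex $v_0$ occurs in the sequence $(v_0, \ldots, v_{|P|})$ only at position $0$, and $v_{|P|}$ only at position $|P|$. Hence the only edge of $P$ containing $v_0$ is $e_1$, and the only edge containing $v_{|P|}$ is $e_{|P|}$. Consequently every edge $e_i$ with $2 \le i \le |P|-1$ has both endpoints among $v_1, \ldots, v_{|P|-1}$, i.e.\ is internal. This gives $|I(P)| \ge (|P| - 1) - 2 + 1 = |P| - 2$, which rearranges to $|P| \le |I(P)| + 2$, as claimed.

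Finally I would dispose of the small cases ($|P| \le 2$): there $e_1$ and $e_{|P|}$ may coincide or $P$ may have no internal edges at all, but the inequality holds trivially because its right-hand side is at least $2 \ge |P|$. I do not expect any genuine obstacle here; the only thing to be careful about is to invoke simplicity of $P$ so that the two endpoints each lie on a unique edge of $P$, which is exactly the restriction to non-cyclic simple paths stated immediately above the observation.
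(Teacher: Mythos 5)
Your argument is correct and is exactly the elementary counting the paper has in mind: the paper states this observation as immediate without a written proof, and your write-up simply spells out that in a simple non-cyclic path only the first and last edges can fail to be internal, so $|I(P)|\ge |P|-2$. Nothing further is needed.
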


Any edge may serve as an internal edge in possibly many different maximal alternating paths.
The following key lemma bounds the number of such paths by the edge's weight.
\begin{lemma}
\label{Edge Belonging}
    Any colored edge $e$ can be an internal edge of at most $w(e)$ maximal alternating paths.
\end{lemma}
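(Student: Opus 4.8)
The plan is to prove Lemma \ref{Edge Belonging} by fixing a colored edge $e = (u,v)$ and counting the maximal alternating paths in which $e$ is internal. The key structural observation is that a maximal $(c_0,c_1)$-alternating path is completely determined, as a set of edges, by specifying the unordered pair of colors $\{c_0,c_1\}$ together with any single edge it contains: starting from that edge, the path extends uniquely in both directions by the alternation rule, until it reaches a vertex missing one of the two colors (or closes into a cycle, which we exclude). So if $e$ is internal in a maximal $(c_0,c_1)$-alternating path $P$, then $\{c_0,c_1\}$ and $e$ together pin down $P$; hence the number of such paths is at most the number of color-pairs $\{c_0,c_1\}$ for which $e$ can possibly be an internal edge of a maximal $(c_0,c_1)$-alternating path.

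Next I would bound the number of admissible color-pairs. Since $e$ lies on the path, one of $c_0, c_1$ must be $\chi(e)$; write $\chi(e) = c$. So the pair is $\{c, c'\}$ for some second color $c'$. I claim $c'$ must be a color appearing on an edge incident to $u$ (equivalently, using the appropriate endpoint; I would set this up so that the bound comes out as $\de(u)$, and by symmetry also $\de(v)$, hence $\min\{\de(u),\de(v)\} = w(e)$). The point is that for $e=(u,v)$ to be an \emph{internal} edge, both $u$ and $v$ are internal vertices of $P$, so each of $u$ and $v$ has degree exactly $2$ in $P$ — in particular $u$ is incident in $P$ to $e$ (colored $c$) and to exactly one other path edge, which is colored $c'$. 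Therefore $c'$ is the color of some edge incident to $u$ other than $e$. The number of choices for $c'$ is thus at most $\de(u) - 1 < \de(u)$, and symmetrically at most $\de(v)$. Combining with the previous paragraph, the number of maximal alternating paths having $e$ as an internal edge is at most $\min\{\de(u),\de(v)\} = w(e)$.

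The main obstacle, and the step requiring the most care, is justifying that the color-pair $\{c_0,c_1\}$ together with the edge $e$ determines the maximal alternating path uniquely — i.e., that the extension process is forced at every step. This relies on $\chi$ being a proper partial coloring: at any vertex $z$ of the path that is not an endpoint, $z$ is incident to an edge colored $c_0$ and an edge colored $c_1$, and propriety means these are the \emph{only} such incident edges, so the path has no choice about how to continue through $z$. One must also handle the boundary case where the extension would close into a cycle; the lemma statement and the surrounding discussion (\enquote{we do not consider cycles}) let us simply exclude these, and in any case a given edge $e$ lies on at most one maximal $(c_0,c_1)$-alternating structure for each pair, cycle or not, so the count is unaffected. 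I would write the uniqueness argument as a short induction on the number of edges revealed from $e$ outward in each direction, invoking propriety of $\chi$ at each revealed internal vertex.

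Finally, I would remark that the bound is stated for colored edges because an uncolored edge carries no color and cannot lie on an alternating path at all, so the statement is vacuous (or trivially zero) in that case; the interesting content is entirely in the colored case, and the argument above is phrased for $\chi(e) \in [\Delta+1]$.
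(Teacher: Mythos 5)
Your proposal is correct and follows essentially the same route as the paper: both observe that one of the two path colors must be $\chi(e)$, that the second color must appear on an edge incident to each endpoint of $e$ (giving at most $\min\{\de(u),\de(v)\} = w(e)$ choices), and that for each color pair there is at most one maximal alternating path through $e$. Your write-up merely spells out the uniqueness-by-propriety step in more detail than the paper does; no gap.
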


\begin{proof}
Let $e=(u,v)$ be a colored edge with color $c_{e}$.
For any color $c_2 \ne c_e$, we note that $e$ can be internal in a maximal $(c_e,c_2)$-alternating path $P$ only if each among $u$ and $v$ is incident on an edge with color $c_{2}$,
thus the number of such colors $c_2$ is bounded by $\min\{\de(u),\de(v)\}=w(e)$.
To complete the proof, we note that there is at most one maximal $(c_e,c_2)$-alternating path that contains $e$, for any color $c_2$. 
\end{proof}

For a graph $G$ with a given coloring $\chi$, denote by $MP = MP(G,\chi)$ the set of maximal alternating paths in $G$ induced by $\chi$. 

\begin{lemma}
    \label{IMP Inequaly}
    $\sum_{P\in MP}|I(P)|\leq \sum_{e \in \ecl}w(e)$.
\end{lemma}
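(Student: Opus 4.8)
The plan is to prove the inequality by a double-counting (charging) argument whose only nontrivial ingredient is Lemma \ref{Edge Belonging}. The left-hand side counts, with multiplicity, incidences between maximal alternating paths and their internal edges; I will re-express this as a sum over edges and then bound each edge's contribution by its weight.

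First I would note that every edge lying on a maximal alternating path is colored: by definition a $(c_0,c_1)$-alternating path consists only of edges colored $c_0$ or $c_1$, so in particular every internal edge of every $P \in MP$ belongs to $\ecl$. This is the small point that needs care, since it is what makes $\ecl$ the correct index set on the right-hand side — no uncolored edge is ever charged.

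Next I would interchange the order of summation. Consider the incidence relation ``$e$ is an internal edge of $P$'' between $P \in MP$ and $e \in \ecl$. Counting the number of such pairs in two ways gives
\[
\sum_{P \in MP} |I(P)| \;=\; \bigl|\{(P,e) : P \in MP,\ e \in I(P)\}\bigr| \;=\; \sum_{e \in \ecl} \bigl|\{P \in MP : e \in I(P)\}\bigr|.
\]
Now I apply Lemma \ref{Edge Belonging}: for each colored edge $e$, the number of maximal alternating paths in which $e$ is internal is at most $w(e)$. Bounding the inner term of the rightmost sum by $w(e)$ for every $e \in \ecl$ yields $\sum_{e \in \ecl} \bigl|\{P : e \in I(P)\}\bigr| \le \sum_{e \in \ecl} w(e)$, which is exactly the claimed inequality.

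I do not expect a genuine obstacle here: once Lemma \ref{Edge Belonging} is available, the statement follows from a single interchange of summation followed by a termwise bound. The only subtlety worth stating explicitly is the first step — that internal edges of maximal alternating paths are colored — which guarantees that the right-hand side is both well-defined and a valid upper bound.
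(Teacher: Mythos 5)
Your proof is correct and follows essentially the same route as the paper's: interchange the order of summation over path--internal-edge incidences, observe that all such edges are colored (which is why $\ecl$ is the right index set), and bound each edge's contribution by Lemma \ref{Edge Belonging}. No issues.
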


\begin{proof}
    \begin{eqnarray*}
        \sum_{P\in MP} |I(P)| ~=~ 
        \sum_{P \in MP} \sum_{e \in I(P)} 1 ~=~
        \sum_{e \in E} \sum_{P\in MP: e \in I(P)} 1
        ~=~
        \sum_{e \in \ecl} \sum_{P\in MP: e \in I(P)} 1
        ~\leq~
        \sum_{e \in \ecl} w(e),
    \end{eqnarray*}
    where the last inequality 
    follows from Lemma \ref{Edge Belonging},
    as $\sum_{P\in MP: e \in I(P)} 1$ is just the number of maximal alternating paths in which the colored edge $e$ is an internal edge.
\end{proof}

\subsection{Algorithm \texttt{Color-One-Edge}: An Algorithm for Coloring a Single Edge}
The following algorithm is based on Algorithm \texttt{Random-Color-One} of \cite{Sin}, but with one crucial tweak: Given the chosen random edge, we focus on the endpoint of the edge of {\em minimum degree}.

\begin{algorithm}[H]\label{alg:one}
    \SetAlgoLined
    \DontPrintSemicolon
    \KwIn{A graph $G$ with a partial ($\Delta+1$)-coloring $\chi$}
    \KwOut{Updated partial ($\Delta+1$)-coloring $\chi$ of $G$, such that one more edge is colored by $\chi$}
     $e=(u,v) \leftarrow$ A random uncolored edge\;
    Assume w.l.o.g.\ that $\de(u) \le \de(v)$\;
    $F,c_{1} \leftarrow$ \texttt{Make-Primed-Fan}$(G,\chi,(u,v))$\;
    Choose a random color $c_{0} \in M(u)$\;
    $P \leftarrow$ The maximal $(c_{0},c_{1})$-alternating path starting at $u$\;
    $\chi \leftarrow$\texttt{Extend-Coloring}$(G,\chi,e,F,c_{1},P)$\;
    \Return{$\chi$}
    \caption{\texttt{Color-One-Edge}$(G,\chi)$}
\end{algorithm}

\begin{lemma}
\label{Color-One-Edge Bound}
Let $G$ be a graph of maximum degree $\Delta$ and let $\chi$ be a partial $(\Delta + 1)$-coloring of all the edges in $G$ but $l$.
Then the expected runtime of \texttt{Color-One-Edge} on $G$ and $\chi$ is $O(\frac{w(G)}{l}) = O(\frac{m \alpha}{l})$.
\end{lemma}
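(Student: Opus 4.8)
The plan is to bound the expected runtime of \texttt{Color-One-Edge} by decomposing it, via Lemma \ref{Extend bound} and Lemma \ref{making fan bound}, into the cost of building the primed fan, which is $O(\de(u))$, plus the cost of flipping the maximal alternating path $P$, which is $O(|P|)$. So it suffices to show that $\ex[\de(u)] = O(w(G)/l)$ and $\ex[|P|] = O(w(G)/l)$, where the expectation is over the random choice of the uncolored edge $e=(u,v)$ (and the random missing color $c_0$).

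For the fan term: since $u$ is chosen as the endpoint of $e$ of minimum degree, we have $\de(u) = w(e)$. The edge $e$ is uniform among the $l$ uncolored edges, so $\ex[\de(u)] = \frac{1}{l}\sum_{e \in \eun} w(e) \le \frac{1}{l} \sum_{e \in E} w(e) = \frac{w(G)}{l}$, and by Claim \ref{cl:basearb} this is $O(m\alpha/l)$. This part is straightforward.

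The main work — and the step I expect to be the crux — is bounding $\ex[|P|]$. Here I would follow Sinnamon's argument adapted to weights. First, by Observation \ref{Path Bound}, $|P| \le |I(P)| + 2$, so it is enough to bound $\ex[|I(P)|]$ by $O(w(G)/l)$ (the additive $2$ is absorbed since $w(G) \ge m \ge l$). The path $P$ is determined by the starting vertex $u$ and the color pair $(c_0, c_1)$; the key point is that, fixing the current coloring $\chi$, the path $P$ that the algorithm actually flips is one of the maximal alternating paths in $MP(G,\chi)$ — or more precisely, the relevant combinatorial object whose internal edges we must count is drawn from $MP$. I would argue that for each fixed maximal alternating path $Q \in MP$, the probability that the algorithm's path $P$ equals (or is contained in a flip involving) $Q$ is at most $O(1/l)$: the uncolored edge $e$ must be one of the (at most two, and for a careful accounting, controlled number of) uncolored edges that could trigger $Q$, and it is chosen with probability $1/l$ each. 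Then
\begin{align*}
\ex[|I(P)|] \;\le\; \sum_{Q \in MP} \pr[P = Q]\cdot |I(Q)| \;\le\; \frac{O(1)}{l}\sum_{Q \in MP} |I(Q)| \;\myineq\; \frac{O(1)}{l}\sum_{e \in \ecl} w(e) \;\le\; \frac{O(1)}{l}\, w(G),
\end{align*}
using Lemma \ref{IMP Inequaly} for the middle inequality and $w(\ecl) \le w(G)$ at the end; finally Claim \ref{cl:basearb} gives $O(m\alpha/l)$.

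The subtle point I would need to nail down carefully is the probabilistic claim that each maximal alternating path in $MP$ is "selected" with probability $O(1/l)$. The colors $c_1$ (from the fan) and $c_0 \in M(u)$ are not uniform over all color pairs, so one cannot naively say each of the $l \cdot \poly(\Delta)$ (vertex, color-pair) combinations is equally likely. The right way to see it, following Sinnamon, is to sum over the choice of uncolored edge $e$ only: for a fixed $\chi$ and a fixed maximal alternating path $Q$, the internal edges of $Q$ can be charged so that $Q$'s internal length is counted at most $O(1)$ times per uncolored edge that could lead to it — indeed, once $e$ (hence $u$) is fixed, the path flipped starts at the fixed vertex $u$, and the number of distinct maximal alternating paths through $u$ that the algorithm could possibly flip, weighted appropriately, telescopes against the $1/l$ probability. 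I would present this as: condition on $e$, bound $\ex[|I(P)| \mid e]$ by the total internal length of all maximal alternating paths that can start at $u$ — but this needs the extra observation that each such path is counted against the fan/color structure only a bounded number of times — and then average over $e$ using that each edge contributes its internal-path count which sums (over all edges, via Lemma \ref{IMP Inequaly}) to $w(\ecl)$. Making this charging rigorous, rather than the clean but slightly informal "probability $1/l$ per path" statement, is where the real care lies; everything else is bookkeeping with the already-established lemmas.
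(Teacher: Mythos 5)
Your overall skeleton matches the paper's proof: split the cost into the fan part and the path part (Claims \ref{cl:total}--\ref{cl:degu} in the paper), handle the fan by $\de(u)=w(e)$ and uniformity over the $l$ uncolored edges, and handle the path by bounding $|P|$ through $|I(P)|$, a per-path selection probability of $O(1/l)$, and Lemma \ref{IMP Inequaly}. The fan part of your argument is complete and essentially identical to the paper's Claim \ref{cl:degu}.

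However, the step you yourself flag as the crux --- why each fixed maximal alternating path $Q\in MP$ is selected with probability $O(1/l)$ --- is left unresolved, and the justification you sketch does not work. Your claim that ``the uncolored edge $e$ must be one of the (at most two \dots) uncolored edges that could trigger $Q$'' is false: every uncolored edge incident on an endpoint $v$ of $Q$ for which $v$ is the minimum-degree endpoint can cause the algorithm to start at $v$, and there can be many such edges, so summing only over the edge choice gives a bound of roughly $l(v)/l$ per endpoint (where $l(v)$ is the number of uncolored edges at $v$), not $O(1/l)$. The missing idea, which is exactly how the paper closes this gap, is a cancellation coming from the random choice of $c_0$: for $P_r=Q$ the algorithm must start at an endpoint $v\in\{u_0(Q),u_{|Q|}(Q)\}$ \emph{and} draw $c_0$ equal to the unique color of $Q$ missing at $v$, which conditioned on $u_r=v$ happens with probability at most $1/|M(v)|$ since $c_0$ is uniform on $M(v)$; combining this with $\pr(u_r=v)\le l(v)/l$ and the observation that $l(v)\le |M(v)|$ in any partial $(\Delta+1)$-coloring (each uncolored edge at $v$ leaves a slot of the palette free) yields
\[
\pr(P_r=Q)\;\le\;\frac{1}{|M(u_0(Q))|}\cdot\frac{l(u_0(Q))}{l}\;+\;\frac{1}{|M(u_{|Q|}(Q))|}\cdot\frac{l(u_{|Q|}(Q))}{l}\;\le\;\frac{2}{l}.
\]
Your alternative ``charging/telescoping'' sketch conditioned on $e$ does not substitute for this, because without the $1/|M(u)|$ factor there is no mechanism that caps the contribution of a vertex with many incident uncolored edges. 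So the proposal has the right architecture but a genuine gap at its central probabilistic estimate; once the displayed bound is established, the rest of your argument (Observation \ref{Path Bound}, Lemma \ref{IMP Inequaly}, Claim \ref{cl:basearb}) goes through exactly as in the paper.
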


\begin{proof}
Denote the runtime of Algorithm
\texttt{Color-One-Edge}
= \texttt{Color-One-Edge}$(G,\chi)$
by
$T($\texttt{Color-One-Edge}$)$.

Let $e_{r} = (u_{r},v_{r})$ be the random uncolored edge chosen in line 1 of the algorithm, with $\de(u_{r}) \le \de(v_{r})$, let $c_{r}$ be the random missing color chosen in line 4 of the algorithm, and let $P_{r}$ be the maximal alternating path starting at $u_{r}$ obtained in line 5 of the algorithm.
By definition,
both $|P_{r}|$ and $ \de(u_{r})$ are random variables.

    We will prove the lemma as a corollary of the following three claims:
    \begin{claim} \label{cl:total}
        $\ex[T(\texttt{Color-One-Edge})] = O(\ex[|P_{r}|]+ \ex[\de(u_{r})])$.
    \end{claim}
    \begin{claim} \label{cl:lengthP}
        $\ex[|P_{r}|]=O(1 + \frac{1}{l}\sum_{e \in \ecl} w(e))$.
    \end{claim}
    \begin{claim} \label{cl:degu}
        $\ex[\de(u_{r})]=O(\frac{1}{l}\sum_{e \in \eun} w(e))$.
    \end{claim}

    \begin{proof}[Proof of Claim \ref{cl:total}]
        As described in App.\ \ref{Data Sets},
        we can pick a random uncolored edge in constant time. 
        In addition, 
        we can pick a random missing color on a vertex in (expected) time linear to its degree. Also, we can compute the path $P_{r}$ in $O(|P_{r}|)$ time by repeatedly adding edges to it while possible to do.
        Consequently, by Lemmas \ref{making fan bound} and \ref{Extend bound}, which bound the running time of Algorithms \texttt{Make-Primed-Fan} and \texttt{Extend-Coloring} by $O(|P_{r}|)+O(\de(u_{r}))$, we conclude that $\ex[T(\texttt{Color-One-Edge})] = O(\ex[|P_{r}|] +\ex[\de(u_{r})])$.
    \end{proof}
    
    \begin{proof}[Proof of Claim \ref{cl:lengthP}]
        By Observation \ref{Path Bound} $\ex[|P_{r}|] \le \ex[|I(P_{r})|]+O(1)$. We next prove  that
        $\ex[|I(P_{r})|]=O(\frac{1}{l}\sum_{e \in \ecl} w(e))$.

        For every maximal alternating path $P$ in $MP$, 
        let $u_{0}(P)$ and $u_{|P|}(P)$ be the first and last endpoints of $P$, respectively, and let $c_{0}(P)$ and $c_{|P|}(P)$ be the missing colors of $u_{0}(P)$ and $u_{|P|}(P)$ from the two colors of $P$, respectively.
        For a vertex $v$, denote by $l(v)$ the number of uncolored edges incident on $v$.
        Note that for every $P$ in $MP$,
        \begin{eqnarray*}
        \pr(P_{r}=P)&\leq& \pr(c_{r}=c_{0}(P)  ~\vert~ u_{r} = u_{0}(P)) \cdot \pr(u_{r} =u_{0}(P))\\
        && ~+~ \pr(c_{r}=c_{|P|}(P) ~\vert~ u_{r} =u_{|P|}(P)) \cdot \pr(u_{r} = u_{|P|}(P))\\
        &\leq& \frac{1}{|M(u_{0}(P))|}\cdot\frac{l(u_{0}(P))}{l} ~+~
        \frac{1}{|M(u_{|P|}(P))|}\cdot\frac{l(u_{|P|}(P))}{l}\\
         &\leq& \frac{1}{|M(u_{0}(P))|}\cdot\frac{|M(u_{0}(P))|}{l} ~+~ \frac{1}{|M(u_{|P|}(P))|}\cdot\frac{|M(u_{|P|}(P))|}{l} ~=~\frac{2}{l}.
        \end{eqnarray*}
        It follows that
        \[\ex{[|I(P_{r})|]} ~=~
        \sum_{P\in MP} \pr(P_{r}=P) \cdot |I(P)| ~\leq~
        \sum_{P\in MP} \frac{2}{l} \cdot |I(P)| 
        ~~~~~~~\myineq~~~~~~~
         \sum_{e \in \ecl} \frac{2}{l} \cdot w(e)\]
         which is $ O\left(\frac{1}{l}\sum_{e \in \ecl} w(e)\right)$.
    \end{proof}
    
    \begin{proof}[Proof of Claim \ref{cl:degu}]
        Define $D_{m}(u)$ to be the set of uncolored edges $e=(u,v)$, such that 
         $\de(u)\le \de(v)$.
        Now,
        $$\pr(u_{r}= u)\\ ~\leq~ \pr\left(e_{r} =(u,v) \cap \de(u) \le \de(v)\right) 
        ~=~ \frac{|D_{m}(u)|}{l}.$$
        Consequently, \begin{eqnarray*}
        \ex[\de(u_{r})] &=&
        \sum_{u \in V} \pr(u_{r}=u)\cdot \de(u)
        ~\leq~ \sum_{u \in V} \frac{|D_{m}(u)|}{l}\cdot \de(u) ~=~
        \frac{1}{l}\sum_{u \in V} \sum_{e \in D_{m}(u)} \de(u) 
        \\&=&
        \frac{1}{l}\sum_{e \in \eun}\sum_{x: e \in D_{m}(x)}\de(x) ~\le~
        \frac{1}{l}\sum_{e=(u,v) \in \eun} 2\cdot \min\{\de(u),\de(v)\}
        \end{eqnarray*}
        which is $O\left(\frac{1}{l}\sum_{e \in \eun} w(e)\right)$.
    \end{proof}

    Now we are ready to complete the proof of Lemma \ref{Color-One-Edge Bound}.
    Using Claims \ref{cl:total}, \ref{cl:lengthP} and \ref{cl:degu} we get 
    \begin{eqnarray*}
    \ex[T(\texttt{Color-One-Edge})] ~=~ O(\ex[|P_{r}|]+ \ex[\de(u_{r})])
    ~=~ O\left(1 + \frac{1}{l}\sum_{e \in \ecl} w(e) + \frac{1}{l}\sum_{e \in \eun} w(e)\right)
    \end{eqnarray*}
    which is $O\left(\frac{w(G)}{l}\right)$.
    Recalling that $w(G) = O(m \alpha)$ holds by Claim \ref{cl:basearb} completes the proof.
\end{proof}

\subsection{Algorithm \texttt{Color-Edges}: Coloring a Single Edge Iteratively}

The following algorithm is given as input a graph $G$ and a partial $(\Delta+1)$-coloring $\chi$, and it returns as output a proper $(\Delta+1)$-coloring for $G$. This algorithm is identical to the analogous one by \cite{Sin}; however, here we invoke our modified \texttt{Color-One-Edge} Algorithm in line 2 rather than the analogous one by \cite{Sin}.

\begin{algorithm}[H]\label{alg:many}
    \SetAlgoLined
    \DontPrintSemicolon
    \KwIn{A graph $G$ with a partial ($\Delta+1$)-coloring $\chi$}
    \KwOut{Updated partial ($\Delta+1$)-coloring $\chi$ of $G$, such that all the edges in $G$ are colored}
    \While{$\eun \ne \emptyset$} {
    $\chi \leftarrow \texttt{Color-One-Edge}(G,\chi)$\;
    }
    \Return{$\chi$}
    \caption{\texttt{Color-Edges}$(G,\chi)$}
\end{algorithm}

\begin{lemma}
\label{Color-Edges Bound}
    The expected runtime of Algorithm \texttt{Color-Edges} on a graph $G$ with an empty partial ($\Delta+1$)-coloring $\chi$ is $O(w(G)\log n) = O(m \alpha \log n)$.
\end{lemma}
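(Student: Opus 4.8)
The plan is to bound the total expected runtime of \texttt{Color-Edges} by summing the per-iteration bound from Lemma \ref{Color-One-Edge Bound} over the $m$ calls to \texttt{Color-One-Edge}, using linearity of expectation. First I would observe that starting from an empty coloring, the algorithm makes exactly $m$ iterations of the while loop, and right before the $i$-th iteration there are exactly $l = m - i + 1$ uncolored edges. A subtle point: the coloring $\chi$ at the start of the $i$-th iteration is itself a random object depending on the outcomes of the first $i-1$ iterations, so I would invoke Lemma \ref{Color-One-Edge Bound} \emph{conditioned} on the history, noting that the bound $O(w(G)/l)$ there depends only on $w(G)$ (a fixed quantity, independent of the coloring) and on $l$ (which is deterministically $m-i+1$ at that point), hence the conditional expected cost of the $i$-th iteration is $O(w(G)/(m-i+1))$ regardless of the history. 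Taking expectations over the history and summing then gives a clean bound.

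The key computation is then
\begin{eqnarray*}
\ex[T(\texttt{Color-Edges})] &=& \sum_{i=1}^{m} \ex[T(\texttt{Color-One-Edge at iteration } i)] \\
&=& \sum_{i=1}^{m} O\!\left(\frac{w(G)}{m-i+1}\right) ~=~ O\!\left(w(G) \sum_{j=1}^{m} \frac{1}{j}\right) ~=~ O(w(G) \log m),
\end{eqnarray*}
where I reindexed $j = m-i+1$ and used the harmonic sum bound $\sum_{j=1}^m 1/j = O(\log m) = O(\log n)$ (since $m \le n^2$). Finally, applying Claim \ref{cl:basearb}, which gives $w(G) = O(m\alpha)$, yields the stated bound $O(w(G)\log n) = O(m\alpha\log n)$.

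The only real obstacle is the conditioning issue I flagged above: one must be careful that Lemma \ref{Color-One-Edge Bound} is being applied to the (random) intermediate coloring produced by previous iterations, not to a fixed one. This is handled cleanly because the bound in that lemma is uniform over all partial $(\Delta+1)$-colorings with a given number $l$ of uncolored edges, so the tower property of expectation lets us replace the conditional expectation by the deterministic bound $O(w(G)/l)$ before summing. Everything else is the routine harmonic-sum estimate; no further structural insight is needed.
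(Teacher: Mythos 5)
Your proof is correct and follows essentially the same route as the paper's: apply Lemma \ref{Color-One-Edge Bound} to each of the $m$ iterations with $l = m-i+1$ uncolored edges, sum the harmonic series, and finish with Claim \ref{cl:basearb}. Your explicit handling of the conditioning on the random intermediate coloring (via uniformity of the bound over all partial colorings with $l$ uncolored edges) is a point the paper leaves implicit, but it does not change the argument.
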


\begin{proof}
    As described in App.\ \ref{Data Sets}, we can check whether $\eun = \emptyset$ in constant time.
    Since each call to Algorithm $\texttt{Color-One-Edge}$ colors a single uncolored edge, the while loop consists of  $m$ iterations. Also, the runtime of each iteration is that of the respective call to \texttt{Color-One-Edge}.
    At the beginning of the $i$th iteration, the number of uncolored edges $l$ is $m-i+1$,
    so by Lemma \ref{Color-One-Edge Bound}, the expected runtime of the $i$th iteration, denoted by
    $\ex[T(i \mbox{th iteration})]$,
    is $O(\frac{w(G)}{m-i+1})$.
    It follows that
    $$\ex[T(\texttt{Color-Edges)}] ~=~   \sum_{i=1}^{m}\ex[T(i \mbox{th iteration})]~=~ O\left(\sum_{i=1}^{m}\frac{w(G)}{m-i+1}\right)~=~  O(w(G)\log n).
   $$
   Recalling that $w(G) = O(m \alpha)$ holds by Claim \ref{cl:basearb} completes the proof.
\end{proof}

\section{Our $(\Delta+1)$-Coloring Algorithm: \texttt{Recursive-Color-Edges}} \label{sec:advanced}

In this section we present our $(\Delta+1)$-edge-coloring algorithm,
\texttt{Recursive-Color-Edges},
which proves Theorem \ref{th:main}. Our algorithm is similar to that of \cite{Sin}, which, in turn, in based on Gabow et al.\ \cite{Gabow85} --- except for a few small yet crucial tweaks, one of which is that we employ our algorithm \texttt{Color-Edges} as a subroutine
rather than the analogous subroutine from \cite{Sin}. 
We first describe the approach taken by \cite{Gabow85,Sin},
%\todo{the approach of the recursion is actually of Gabow et al. but the terminology we are using is sinamons}, 
and then present our algorithm and emphasize the specific modifications needed for achieving the improvement in the running time. 

\subsection{The Approach of \cite{Gabow85,Sin}}
%\todo{the approach of the recursion is actually of Gabow et al.}

The algorithm of \cite{Gabow85,Sin} employs a natural divide-and-conquer approach. It partitions the input graph into
two edge-disjoint subgraphs of maximum degree roughly $\Delta/2$, then it recursively computes a coloring with at most $\Delta/2 + 2$ colors for each subgraph separately, and then it stitches together the two colorings into a single coloring. 
Naively
stitching the two colorings  into one would result in up to $\Delta+4$ colors, so the idea is to prune excessive colors and then deal with the remaining uncolored edges via a separate coloring algorithm. 

In more detail, the algorithm consists of four phases: \textbf{Partition}, \textbf{Recurse}, \textbf{Prune}, and \textbf{Repair}.

\begin{itemize}
    \item \textbf{Partition.} The algorithm partitions the edges of the graph into two edge-disjoint subgraphs, such that the edges incident on each vertex are divided between the two subgraphs almost uniformly. This in particular implies that the maximum degree in each subgraph is roughly $\Delta/2$.
    Such a partition can be achieved by a standard procedure, \EMPH{Euler partition},  
    which was used also by \cite{Gabow85,Sin}.
    For completeness,
    in Section \ref{Euler Partitioning}
    we describe this procedure and  prove some basic properties that will be used later. 
    \item \textbf{Recurse.} The algorithm recursively computes a coloring with at most $\Delta/2+2$ colors for each subgraph separately, where the two colorings use disjoint palettes of colors. Then, we combine the two colorings into one by simply taking their union,
    which results with a proper coloring with at most $\Delta+4$ colors.
    \item \textbf{Prune.} At this point, the number of colors used in the coloring is $\Delta'$, for $\Delta'  \le \Delta+4$, which exceeds the required bound of $\Delta+1$. To prune the up to three extra colors,
    the algorithm 
    groups the edges into color classes, 
    chooses the $\Delta' - (\Delta + 1)$ color classes of smallest size,
    and then uncolors all edges in those color classes. 
    {\em In our algorithm, we choose the  $\Delta' - (\Delta + 1)$ color classes of smallest  \EMPH{weight}, where the weight of a color class is the sum of weights of edges with that color.}
    \item \textbf{Repair.} To complete the partial coloring into a proper $(\Delta+1)$-coloring, each of the uncolored edges resulting from the \textbf{Prune} phase has to be recolored; this is done by a separate coloring algorihtm. {\em In our algorithm, this separate coloring algorithm is Algorithm \texttt{Color-Edges} from Section \ref{sec:basicalg}.}

\end{itemize}

\subsection{The \texttt{Euler Partition} Procedure}
\label{Euler Partitioning}

An \EMPH{\texttt{Euler partition}} is a partition of the edges of a graph into a set of edge-disjoint tours,
such that every odd-degree vertex is the endpoint of exactly one tour, and no even-degree vertex
is an endpoint of any tour (some tours may be cycles). Such a partition can be computed greedily in linear time, by simply removing maximal tours of the graph until no edges remain. The edges of the graph can then be split between the subgraphs by traversing each tour and alternately assigning the edges to the two subgraphs.
Internal edges of the tours split evenly between the two subgraphs, so only two cases may cause an unbalanced partition:

\begin{itemize}
    \item \textbf{Case 1: The two endpoints of any tour.} The only edge incident on any tour endpoint is assigned to one of the two subgraphs, which causes at most 1 extra edge per vertex in one of the two subgraphs.
    \item \textbf{Case 2: A single vertex in any tour that is an odd length cycle.} The cycle edges can be split evenly among all vertices of the cycle but one, the {\em starting vertex}, for which there are 2 extra edges in one of the two subgraphs. Note that the algorithm is free to choose (1) any of the cycle vertices as the starting vertex, and (2) the subgraph among the two to which the 2 extra edges would belong;
the algorithm will carry out these choices so as to minimize the discrepancy in degrees of any vertex.
\end{itemize}
The next observation follows directly from the description of the \texttt{Euler partition} procedure (and by handling the two aforementioned cases that may cause an unbalanced partition in the obvious way). 
\begin{observation}
\label{Basic Property}
    Let $G$ be a graph and let $G_{1},G_{2}$ be the subgraphs of $G$ obtained by the \texttt{Euler partition} procedure.
    Then for every vertex $v\in G$,
    the degrees of $v$ in $G_{1}$ and $G_{2}$, denoted by $\de_{G_{1}}(v)$ and $\de_{G_{2}}(v)$ respectively, satisfy
    $$\frac{1}{2}\de_{G}(v)-1 ~\leq~ \de_{G_{1}}(v),\de_{G_{2}}(v) ~\leq~ \frac{1}{2}\de_{G}(v)+1.$$
\end{observation}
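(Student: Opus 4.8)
The plan is to reduce the two-sided inequality to a bound on the \emph{discrepancy} $\delta(v) := \de_{G_1}(v) - \de_{G_2}(v)$ of each vertex. Since $\de_{G_1}(v) + \de_{G_2}(v) = \de_G(v)$, the asserted bounds $\tfrac12\de_G(v) - 1 \le \de_{G_i}(v) \le \tfrac12\de_G(v) + 1$ are precisely equivalent to $|\delta(v)| \le 2$, so it suffices to prove the latter for every $v \in V$.

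To do so, I would account, tour by tour, for the contribution of the Euler partition to $\delta(v)$, using that the edges of each tour are $2$-colored alternately (one color per subgraph). Fix $v$ and walk along a tour: whenever the tour passes \emph{through} $v$ internally, it enters and leaves via two consecutive --- hence oppositely colored --- edges, contributing $0$ to $\delta(v)$. Thus the only visits that can contribute are (i) an endpoint of a non-closed tour, where the unique uncovered edge at $v$ contributes $\pm 1$; and (ii) the designated starting vertex of an odd-length closed tour, where the two ``seam'' edges at $v$ receive the \emph{same} color and hence contribute $\pm 2$ --- and, as noted in the description of the procedure, here the algorithm is free to choose both the starting vertex of that odd cycle and which of the two subgraphs receives those two edges. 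All even-length closed tours, and all other visits of $v$, contribute $0$.

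Now I would invoke the defining property of an Euler partition: every vertex is an endpoint of \emph{at most one} non-closed tour. Hence the type-(i) contributions to $\delta(v)$ sum to some $\epsilon_v \in \{-1,0,1\}$ (whose sign we need not control), and, writing $k_v$ for the number of odd closed tours whose chosen starting vertex is $v$, we get $\delta(v) = \epsilon_v + 2\sum_{i=1}^{k_v}\eta_i$ with each $\eta_i \in \{-1,+1\}$ a free choice. Since these choices are made independently across distinct odd tours --- each seam-color choice affects only that tour's own starting vertex --- for every vertex $v$ we may pick the $\eta_i$ so that $\sum_{i=1}^{k_v}\eta_i$ equals $0$ when $k_v$ is even and equals whichever of $\pm 1$ has the opposite sign to $\epsilon_v$ when $k_v$ is odd; then $2\sum_i\eta_i \in \{0,\pm 2\}$, and a one-line check in the two cases yields $|\delta(v)| = |\epsilon_v + 2\sum_i\eta_i| \le 2$. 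Making these choices for all vertices at once is exactly what ``carrying out the choices so as to minimize the discrepancy'' means in the procedure. I expect this last step --- ruling out that a vertex which serves as the starting vertex of many odd cycles accumulates a large discrepancy --- to be the only real obstacle; it is resolved precisely because the per-tour color choices are independent and can be made to (nearly) cancel at each vertex, while everything else is routine bookkeeping of the alternating $2$-coloring along the tours.
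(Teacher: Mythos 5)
Your proof is correct and follows essentially the same route as the paper, which simply asserts the observation follows from the description of the \texttt{Euler partition} procedure by handling the tour-endpoint and odd-cycle cases ``in the obvious way''. You merely spell out those details explicitly --- in particular, the per-vertex cancellation of the $\pm 2$ seam contributions of odd closed tours, which is exactly what the paper means by choosing the starting vertices and subgraphs ``so as to minimize the discrepancy''.
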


\subsubsection{Analysis of the \texttt{Euler Partition} procedure}
\label{Euler Partitioning.1}
Let $G$ be an $n$-vertex $m$-edge graph, and consider any algorithm $A$ that applies the \texttt{Euler Partition} procedure recursively.
That is, upon recieving the graph $G$ as input, Algorithm $A$ 
splits $G$ into two subgraphs $G_{1}$ and $G_{2}$ {\em using the \texttt{Euler Partition} procedure}, and then recursively applies Algorithm $A$ on $G_{1}$ and   $G_{2}$.

Consider the binary recursion tree
of algorithm $A$:
The first \EMPH{level $L_{0}$} consists of the root node that corresponds to the graph $G$.
The next \EMPH{level $L_{1}$} consists of two nodes that correspond to the two subgraphs of $G$ obtained by applying the \texttt{Euler Partition} procedure on $G$.
In general, the \EMPH{$i$th level $L_{i}$} consists of $2^{i}$ nodes that correspond to the $2^i$ subgraphs of $G$ obtained by applying the \texttt{Euler Partition} procedure on the $2^{i-1}$ subgraphs of $G$ at level $L_{i-1}$.
In App.~\ref{app:proofs}, we prove some basic properties of the recursion tree of Algorithm $A$.

\subsection{The Algorithm}

In this section we present our $(\Delta + 1)$-coloring algorithm.
As mentioned, our algorithm follows closely that of  \cite{Gabow85,Sin}, but we introduce the following changes:
(1) In the \textbf{Prune} phase, we uncolor edges from the three color classes of minimum weight (rather than minimum size); (2) in the \textbf{Repair} phase, as well as at the bottom level of the recursion, we invoke our modified \texttt{Color-Edges} algorithm rather than the analogous one by \cite{Sin,Gabow85}. \texttt{Recursive-Color-Edges} gives the pseudocode for our algorithm.
%(3) our recursion terminates when $\Delta \leq 2\sqrt{\frac{n}{\log n}}$,
%whereas in the algorithm of \cite{Sin} the termination condition is $\Delta \le 1$.\todo{this is actually same as Gabow et al.}

\begin{algorithm}[H]\label{alg:static}
    \SetAlgoLined
    \DontPrintSemicolon
    \KwIn{A graph $G$}
    \KwOut{A ($\Delta+1$)-coloring $\chi$ of $G$}
    \If{$\Delta \leq 2\sqrt{\frac{n}{\log n}}$}{
        $\chi \leftarrow \mbox{Empty $(\Delta+1)$-Coloring of }G$\;
        $\chi \leftarrow \texttt{Color-Edges}(G,\chi)$\;
        \Return{$\chi$}
    }
    \textbf{Partition:}\\
    Decompose $G$ into subgraphs $G_{1}$ and $G_{2}$ by applying the \texttt{Euler partition} procedure.\;
    \textbf{Recurse:} // color $G_{1}$ and $G_{2}$ by at most $\Delta+4$ colors\;
        $\chi_{1} \leftarrow \mbox{Empty $(\Delta_{G_{1}}+1)$-coloring of }G_{1}$ // $\Delta_{G_1} = \Delta(G_1)$ \\
        $\chi_{2} \leftarrow \mbox{Empty $(\Delta_{G_{2}}+1)$-coloring of }G_{2}$ // $\Delta_{G_2} = \Delta(G_2)$ \\
        $\chi_{1} \leftarrow \texttt{Recursive-Color-Edges}(G_{1},\chi_{1})$\;
        $\chi_{2} \leftarrow \texttt{Recursive-Color-Edges}(G_{2},\chi_{2})$\;
        $\chi \leftarrow$ $(\Delta+4)$-coloring of $G$
        obtained by merging $\chi_{1}$ and $\chi_{2}$\;
    \textbf{Prune:}\\
        \While{there are more than $\Delta+1$ colors in $\chi$}{
        $c_{l} \leftarrow$ a color that minimizes $w(c_{l}):=\sum_{e \in E: e \mbox{ colored by } c_{l}}w(e)$\;
        Uncolor all edges colored by $c_{l}$ // The color $c_{l}$ is removed from $\chi$\;
        }
     \textbf{Repair:} // color all   uncolored edges\;
        $\chi \leftarrow \texttt{Color-Edges}(G,\chi)$\;
        \Return{$\chi$}
    \caption{\texttt{Recursive-Color-Edges}$(G)$}
\end{algorithm}

\subsection{Analysis of The Algorithm}

We consider the binary recursion tree of
Algorithm \texttt{Recursive-Color-Edges},
and note that this algorithm can assume the role of Algorithm $A$ in Section \ref{Euler Partitioning.1};
in particular, we follow the notation of Section \ref{Euler Partitioning.1}.

\begin{lemma} \label{lem:timesubgraph}
    For any subgraph $H=(V_{H},E_{H})$ of $G$ at level $L_{i}$ with $m_{H}$ edges,
    maximum degree $\Delta_{H}$ and weight $W_{H}$,
    the expected runtime spent on $H$ due to the call to \texttt{Recursive-Color-Edges} on $G$ is bounded by $$O\left(m_{H}+\frac{W_{H}}{\Delta_{H}}\cdot \left(\frac{n}{\Delta_{H}}+\log n\right)\right).$$
\end{lemma}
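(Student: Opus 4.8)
The plan is to analyze the work done at a single node $H$ of the recursion tree, carefully separating the four phases (\textbf{Partition}, \textbf{Recurse}, \textbf{Prune}, \textbf{Repair}), and to charge everything except the recursive calls (which are accounted for at their own nodes) to the claimed bound. First I would dispense with the base case: if $\Delta_H \le 2\sqrt{n/\log n}$, the algorithm just runs \texttt{Color-Edges}, whose expected cost by Lemma~\ref{Color-Edges Bound} is $O(W_H \log n)$; since $\Delta_H = O(\sqrt{n/\log n})$ gives $n/\Delta_H = \Omega(\sqrt{n \log n}) = \Omega(\Delta_H \log n) \ge \Omega(\log n)$, we have $\frac{W_H}{\Delta_H}\cdot\frac{n}{\Delta_H} = \Omega(W_H \log n)$, so the target bound already dominates. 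In the recursive case, \textbf{Partition} (the \texttt{Euler partition} procedure) costs $O(m_H)$, and the merge step in \textbf{Recurse} costs $O(m_H)$; these are absorbed by the $O(m_H)$ term. The two recursive calls are not counted here.

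Next I would handle the \textbf{Prune} phase. Since there are $\Delta' \le \Delta_H + 4$ colors, grouping into color classes and picking the (at most three) of smallest weight takes $O(m_H + \Delta_H) = O(m_H)$ time. The key consequence is that the total weight of the uncolored edges produced is at most $\frac{3}{\Delta'}\sum_{\text{colors } c} w(c) = O\!\left(\frac{W_H}{\Delta_H}\right)$, since the three lightest classes have weight at most the average, and there are $\ge \Delta_H$ classes. Call this set of uncolored edges $U$, with $w(U) = O(W_H/\Delta_H)$, and let $l_0 = |U|$.

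The heart of the argument is the \textbf{Repair} phase, i.e.\ bounding the expected cost of \texttt{Color-Edges}$(H,\chi)$ when $\chi$ leaves exactly the edges of $U$ uncolored. Here I would \emph{not} just invoke Lemma~\ref{Color-One-Edge Bound} as a black box summed over iterations, because that would give $O(w(H)\log n)$ with $w(H)$ the \emph{full} weight of $H$ (not $w(U)$), which is too weak. Instead I would split $U$ into high-weight edges ($w(e) > \Delta_H/2$) and low-weight edges ($w(e)\le \Delta_H/2$). The number of high-weight edges is at most $w(U)/(\Delta_H/2) = O(W_H/\Delta_H^2)$, and coloring each via \texttt{Color-One-Edge} costs $O(n)$ deterministically (by Lemmas~\ref{making fan bound} and~\ref{Extend bound}, since $|F|+|P| = O(\Delta_H + n) = O(n)$), contributing $O\!\left(\frac{W_H}{\Delta_H}\cdot\frac{n}{\Delta_H}\right)$ in total --- matching the target. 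For the low-weight edges: when the chosen endpoint $u$ of a low-weight edge has $\de(u) \le \Delta_H/2$, it has $|M(u)| \ge \Delta_H + 1 - \Delta_H/2 = \Omega(\Delta_H)$ missing colors, so in the path-length bound (as in the proof of Claim~\ref{cl:lengthP}) the factor $\frac{1}{|M(u_0(P))|}$ becomes $O(1/\Delta_H)$, improving the per-path probability from $O(1/l)$ to $O(1/(l\Delta_H))$ and hence the expected internal path length --- and therefore the expected cost of coloring one low-weight edge when $l$ edges remain uncolored --- to $O\!\left(1 + \frac{w(H)}{l\,\Delta_H}\right)$; here one must be a little careful that only the low-weight, low-degree endpoint case gets the $\Delta_H$ savings, while the $O(1)$ and the fan-size term $\ex[\de(u_r)] = O(w(U)/l)$ are handled directly. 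Summing over the (at most $l_0 = O(W_H/\Delta_H)$, since $|U|\le w(U)$) low-weight iterations with $l$ ranging over $l_0, l_0-1,\dots,1$ gives $O\!\left(l_0 + \frac{w(H)}{\Delta_H}\sum_{l=1}^{l_0}\frac1l\right) = O\!\left(\frac{W_H}{\Delta_H} + \frac{w(H)}{\Delta_H}\log n\right)$; and since $w(H) \le m_H \Delta_H$ trivially, $\frac{w(H)}{\Delta_H}\log n \le m_H \log n$, which is dominated by $O(m_H + \frac{W_H}{\Delta_H}\log n)$ --- wait, I need $m_H\log n$ absorbed, so I instead bound $w(H)/\Delta_H \le W_H/\Delta_H$ is false; rather I use that in the recursive (non-base) regime $n/\Delta_H < \sqrt{n\log n}$ fails, i.e.\ $\Delta_H > 2\sqrt{n/\log n}$ so $n/\Delta_H < \tfrac12\sqrt{n\log n}$, hmm --- the clean route is: the low-weight contribution $O(W_H/\Delta_H \cdot \log n)$ directly fits the $\frac{W_H}{\Delta_H}\log n$ term once I re-derive the per-edge bound as $O(1 + \frac{W_H'}{l\Delta_H})$ with $W_H'$ the weight of the \emph{currently uncolored} set, which only shrinks, so is always $\le w(U) = O(W_H/\Delta_H)$ --- no: the path can use already-colored edges of $H$, so $w(H)$ genuinely appears. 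The honest resolution, which I expect to be the main obstacle, is to show the sum of lengths of maximal alternating paths through \emph{low-weight} edges is the relevant quantity and, combined with the $\Omega(\Delta_H)$-missing-colors gain, yields $O(m_H + \frac{W_H}{\Delta_H}\log n)$; I would follow the paper's Section~\ref{sec:advanced} argument closely here, using Lemma~\ref{Edge Belonging} restricted to low-weight edges so that $\sum_P |I_{\text{low}}(P)| \le \sum_{e:\,w(e)\le \Delta_H/2} w(e) \le \tfrac{\Delta_H}{2}\cdot m_H$, the extra $1/\Delta_H$ probability factor then killing the $\Delta_H$, leaving $O(m_H)$ per "level", i.e.\ $O(m_H \log n)$ overall --- and \emph{that} is absorbed into... it is not absorbed into $O(m_H + \frac{W_H}{\Delta_H}(\frac{n}{\Delta_H}+\log n))$ unless the $\log n$ factor on $m_H$ is avoided. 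I would therefore, in the final write-up, combine the two sub-bounds more carefully using $l_0 \le w(U) = O(W_H/\Delta_H)$ to cap the number of iterations, so that the harmonic sum has only $O(W_H/\Delta_H)$ terms and the $\log n$ multiplies $\frac{W_H}{\Delta_H}$ rather than $m_H$; the $O(m_H)$ term then only needs to absorb the additive $O(1)$-per-iteration and the $O(m_H)$ phase costs.

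Finally, putting the phases together: \textbf{Partition} + merge + \textbf{Prune} bookkeeping $= O(m_H)$; high-weight \textbf{Repair} $= O\!\left(\frac{W_H}{\Delta_H}\cdot\frac{n}{\Delta_H}\right)$; low-weight \textbf{Repair} $= O\!\left(\frac{W_H}{\Delta_H}\log n\right)$; base case $= O(W_H\log n) = O\!\left(\frac{W_H}{\Delta_H}\cdot\frac{n}{\Delta_H}\right)$ in its regime. Summing gives the claimed $O\!\left(m_H + \frac{W_H}{\Delta_H}\left(\frac{n}{\Delta_H}+\log n\right)\right)$, and all expectations are over the internal randomness of the \texttt{Color-One-Edge} calls, so linearity of expectation applies throughout. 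The main obstacle, as flagged, is getting the $\log n$ factor to attach to $\frac{W_H}{\Delta_H}$ rather than to $m_H$ in the low-weight \textbf{Repair} analysis; the fix is to use $|U| \le w(U) = O(W_H/\Delta_H)$ to bound the number of \texttt{Color-One-Edge} iterations, combined with the $\Omega(\Delta_H)$-missing-colors improvement to the per-iteration expected cost.
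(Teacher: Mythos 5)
Your overall plan matches the paper's proof: the $O(m_H)$ accounting for \textbf{Partition}/merge/\textbf{Prune}, the averaging argument giving uncolored weight $O(W_H/\Delta_H)$, the split of \textbf{Repair} into high-weight edges (each handled deterministically in $O(n)$ time, $O(W_H/\Delta_H^2)$ of them) and low-weight edges (handled via the $\Omega(\Delta_H)$-missing-colors gain), and the base case. However, the low-weight analysis — the heart of the lemma — is left in a confused, unresolved state, and the resolution you finally settle on is not correct. The confusion is notational but consequential: $w(H)$ \emph{is} $W_H$ (the weight of $H$ is defined as the sum of its edge weights), so the bound you derived mid-stream, expected cost $O\bigl(1+\frac{w(H)}{l\,\Delta_H}\bigr)$ per iteration and hence $O\bigl(\frac{W_H}{\Delta_H}+\frac{W_H}{\Delta_H}\log n\bigr)$ after the harmonic sum, is already exactly the allowed $\frac{W_H}{\Delta_H}\log n$ term of the target; this is precisely the paper's argument (Lemma~\ref{IMP Inequaly} is applied with the \emph{full} colored weight of $H$, at most $W_H$, in the numerator — not restricted to low-weight edges), and there was nothing to repair. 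Your replacement fix — capping the number of iterations at $l_0=O(W_H/\Delta_H)$ so that ``the $\log n$ multiplies $W_H/\Delta_H$'' — does not work: if the per-iteration bound is the restricted one, $O\bigl(\frac{m_H}{l+1-i}\bigr)$ coming from $\sum_P|I_{\mathrm{low}}(P)|\le\frac{\Delta_H}{2}m_H$, then summing over even $O(W_H/\Delta_H)$ iterations still yields $O(m_H\log l_0)=O(m_H\log n)$, since shortening the harmonic sum does not change which quantity it multiplies. The iteration cap $l_0\le w(U)=O(W_H/\Delta_H)$ is needed only to absorb the additive $O(1)$ per iteration (and the fan terms, as in Claim~\ref{low fans bound}), not to relocate the $\log n$ factor.

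There is also a genuine missing ingredient in the probability bound. Replacing $1/|M(u)|$ by $O(1/\Delta_H)$ in the Claim~\ref{cl:lengthP}-style computation does not by itself give per-path probability $O\bigl(\frac{1}{l\Delta_H}\bigr)$: in that computation the factor $\frac{l(u)}{l}$ (the number of uncolored edges incident to the endpoint $u$) was cancelled against $|M(u)|$, and once you keep $\frac{1}{|M(u)|}=O(1/\Delta_H)$ you must separately argue $l(u)=O(1)$. This is where the structure of the pruned coloring enters: the uncolored edges at the start of \textbf{Repair} are the union of at most three color classes, hence every vertex has at most three incident uncolored edges throughout the phase — exactly the step used in the paper's proof of Claim~\ref{low paths bound} which your write-up never states. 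With that observation, together with using the full $W_H$ in Lemma~\ref{IMP Inequaly} as above (and the routine conditioning on the random partial coloring at the start of each iteration), your outline closes into the paper's proof.
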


\begin{proof}
    Consider first the case $\Delta_{H}\leq 2\sqrt{\frac{n}{\log n}}$.
    The time of the creation of the empty coloring (in line 2 of the code) is $O(m_{H})$. 
    By Lemma \ref{Color-Edges Bound}, the expected time spent on $H$ while running \texttt{Color-Edges} (in line 3) is bounded by
    $$O(W_{H}\log n) ~=~O\left(m_{H}+W_{H}\frac{n}{\Delta_{H}^{2}}\right)\\
    ~=~ O\left(m_{H}+\frac{W_{H}}{\Delta_{H}}\cdot\frac{n}{\Delta_{H}}\right)$$
    $$=O\left(m_{H}+\frac{W_{H}}{\Delta_{H}}\cdot \left(\frac{n}{\Delta_{H}}+\log n\right)\right).$$
    We may henceforth assume that $\Delta_{H}>2\sqrt{\frac{n}{\log n}}$.
    Next, we analyze the time required by every phase of the algorithm.
    We first note that the first three phases of the algorithm can be implemented in $O(m_H)$ time:

\begin{itemize}
    \item \textbf{Partition:}  The \texttt{Euler partition} procedure takes $O(m_{H})$ time,
 as explained in Section \ref{Euler Partitioning}.
    \item \textbf{Recurse:}  We only consider the time spent at the \textbf{Recurse} phase on $H$ itself, i.e.,  the time needed to create the two empty colorings $\chi_1$ and $\chi_2$ and the time needed to merge them into $\chi$, each of which takes time $O(m_{H})$.

    \item \textbf{Prune:} In $O(m_H)$ time we can scan all edges and group them into color classes, compute the weight of each color class, and then find the up to three color classes of lowest weight. The same amount of time  suffices for uncoloring all edges in those three color classes, thereby removing those colors from $\chi$.
\end{itemize}

\noindent
It remains to bound the time required for the \textbf{Repair} phase, denoted by $T(Repair\,H)$.
We will prove that 
\begin{equation} \label{eq:repair}
\ex[T(Repair\,H)]~=~O\left(\frac{W_{H}}{\Delta_{H}}\cdot \left(\frac{n}{\Delta_{H}}+\log n\right)\right),
\end{equation}
and conclude that total expected time of the algorithm is $$O\left(m_{H}+\frac{W_{H}}{\Delta_{H}}\cdot \left(\frac{n}{\Delta_{H}}+\log n\right)\right).$$
We shall bound the expected time for coloring the uncolored edges via Algorithm $\texttt{Color-Edges}$, where 
    the uncolored edges are the ones that belong to the three color classes of minimum weight (We may assume w.l.o.g.\ that exactly three colors have been uncolored in $H$, out of a total of $\Delta_H+4$ colors, by simply adding dummy color classes of weight 0).
    By an averaging argument, the total weight of the uncolored edges is bounded by
\begin{equation} \label{eq:uncoloredwt} \frac{3}{\Delta_{H}+4}\cdot W_{H}=O\left(\frac{W_{H}}{\Delta_{H}}\right).
\end{equation}
Let $PCL(H)$ be the set of all possible partial $(\Delta_{H}+1)$-colorings of $H$, and for every coloring $\chi \in PCL(H)$ let $U(\chi)$ denote the number of edges of $H$ that are uncolored by $\chi$.
    In addition, let $PCL(H,l)$ be the set of all colorings $\chi \in PCL(H)$ with $U(\chi)=l$.

    Note that the partial coloring obtained at the beginning of the \textbf{Repair} phase is not deterministic, and let $\chi_{0}$ denote this random partial coloring. Thus, $U=U(\chi_{0})$ is a random variable.
    Fix an arbitrary integer $l \ge 0$.
    Under the condition $U=l$, Algorithm \texttt{Color-Edges} consists of $l$ iterations that color uncolored edges via Algorithm $\texttt{Color-One-Edge}$.
    
    For every iteration $i=1,...,l$, let $\chi_{i}\in PCL(H,l+1-i)$ be the random partial coloring at the beginning of the iteration, let $e_{i}=(u_{i},v_{i})$ be the random uncolored edge chosen in line 1 of Algorithm $\texttt{Color-One-Edge}$, with $\de(u_{i})\le \de(v_{i})$, let $c_{i}$ be the random missing color of $u_{i}$ chosen in line 4 of that algorithm, and let $P_{i}$ be the maximal alternating path starting at $u_{i}$ obtained in line 5 of that algorithm.

    Each uncolored edge $e_{i}$ is colored via a call to Algorithm \texttt{Color-One-Edge}, whose expected time is dominated (by Claim \ref{cl:total}) by $O(\ex[|P_{i}|] + \ex[\de(u_{i})])$.
    So the total expected time for coloring all the $l$ uncolored edges under the condition $U=l$, namely $\ex[T(Repair\,H)~|~U=l]$, satisfies
    \begin{eqnarray}
    \label{total repair}
    \ex[T(Repair\,H)~|~U=l] ~=~ 
         \sum_{i=1}^{l} O(\ex[|P_{i}|~|~U=l]) + \sum_{i=1}^{l} O(\ex[\de(u_{i})~|~U=l]).
    \end{eqnarray}
    In App.~\ref{app:proofs}, we prove the following two claims, and use them to prove Eq.\ \ref{eq:repair}.

    \begin{claim}
    \label{low fans bound}
        For any integer $l \ge 0$,
        $\sum_{i=1}^{l} \ex[\de(u_{i})~|~U=l] = O\left(\frac{W_{H}}{\Delta_{H}}\right)$.
    \end{claim}

    \begin{claim}
    \label{low paths bound}
        For any integer $l \ge 0$, $\sum_{i=1}^{l}\ex[|P_{i}|\mbox{ $|$ } U=l] = O\left(\frac{W_{H}}{\Delta_{H}}\cdot\left(\frac{n}{\Delta_{H}}+\log n\right)\right)$.
    \end{claim}
\end{proof}

In App.~\ref{app:proofs} we prove the following lemma using Lemma~\ref{lem:timesubgraph}.
\begin{lemma}\label{lem:last}
    The expected runtime of Algorithm \texttt{Recursive-Color-Edges} on $G$ is bounded by
    $$O\left(W\cdot \min\left\{\log n,\frac{\sqrt{n\log n}}{\Delta}\right\}+m\log n\right).$$
\end{lemma}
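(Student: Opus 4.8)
The plan is to sum the per-subgraph bound of Lemma~\ref{lem:timesubgraph} over all nodes of the binary recursion tree of \texttt{Recursive-Color-Edges}, organized level by level. For a subgraph $H$ at level $L_i$ with $m_H$ edges, maximum degree $\Delta_H$, and weight $W_H$, Lemma~\ref{lem:timesubgraph} gives an expected cost of $O\!\left(m_H + \frac{W_H}{\Delta_H}\cdot(\frac{n}{\Delta_H}+\log n)\right)$. First I would control the parameters across a level: by Observation~\ref{Basic Property}, degrees roughly halve at each recursion step, so at level $L_i$ every subgraph has maximum degree $\Theta(\Delta/2^i)$ (up to the additive $\pm 1$ slack, which is absorbed since the recursion bottoms out at $\Delta_H = \Theta(\sqrt{n/\log n})$, keeping degrees comfortably large at every internal level). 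The number of edges sums to $m$ across each level, i.e.\ $\sum_{H \in L_i} m_H = m$. For the weight, I would use the fact (established in the appendix properties of the recursion tree, following the discussion in Section~\ref{Euler Partitioning.1}) that since each vertex's degree shrinks by a factor $\approx 2$ per level, the weight $w(e) = \min\{\de(u),\de(v)\}$ of each surviving edge also shrinks by $\approx 2$ per level, so $\sum_{H \in L_i} W_H = O(W/2^i)$, where $W = w(G)$.

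Next I would plug these into the level sum. The $\sum_H m_H = m$ term contributes $O(m)$ per level; since there are $O(\log \Delta) = O(\log n)$ levels, this gives $O(m\log n)$ total. For the main term, at level $L_i$ we have $\Delta_H = \Theta(\Delta/2^i)$ and $\sum_{H\in L_i} W_H = O(W/2^i)$, so the level-$i$ contribution is
\begin{equation*}
O\!\left(\frac{W/2^i}{\Delta/2^i}\cdot\left(\frac{n}{\Delta/2^i}+\log n\right)\right) = O\!\left(\frac{W}{\Delta}\cdot\left(\frac{n\,2^i}{\Delta}+\log n\right)\right).
\end{equation*}
Summing over $i$ from $0$ up to the bottom level $i^\star$, the $\log n$ part yields $O\!\left(\frac{W}{\Delta}\log n \cdot i^\star\right) = O\!\left(\frac{W}{\Delta}\log^2 n\right)$, wait — I need to be more careful here: actually I should bound $\frac{W}{\Delta}\cdot\log n$ times the number of levels, but the number of levels is $O(\log n)$ only up to the point where $\Delta/2^i \approx \sqrt{n/\log n}$, i.e.\ $i^\star = \log(\Delta/\sqrt{n/\log n})$, which is still $O(\log n)$. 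The $\frac{n\,2^i}{\Delta}$ part is a geometric series dominated by its largest term $i = i^\star$, where $\Delta/2^{i^\star} = \Theta(\sqrt{n/\log n})$, giving $\frac{n\,2^{i^\star}}{\Delta} = \frac{n}{\Delta/2^{i^\star}} = \Theta(\sqrt{n\log n})$; hence this part sums to $O\!\left(\frac{W}{\Delta}\cdot\sqrt{n\log n}\right)$.

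Combining, the total is $O\!\left(\frac{W}{\Delta}\sqrt{n\log n} + \frac{W}{\Delta}\log n\cdot\log n + m\log n\right)$. To match the claimed bound $O\!\left(W\cdot\min\{\log n, \frac{\sqrt{n\log n}}{\Delta}\} + m\log n\right)$, I would observe two things. When $\Delta > 2\sqrt{n/\log n}$, we have $\frac{\sqrt{n\log n}}{\Delta} < \log n$, so $\min\{\log n, \frac{\sqrt{n\log n}}{\Delta}\} = \frac{\sqrt{n\log n}}{\Delta}$, and the dominant recursion term $\frac{W}{\Delta}\sqrt{n\log n} = W\cdot\frac{\sqrt{n\log n}}{\Delta}$ is exactly of the claimed form; the $\frac{W}{\Delta}\log^2 n$ term is then smaller than $W\cdot\frac{\sqrt{n\log n}}{\Delta}$ precisely because $\Delta$ large forces $\log^2 n = o(\sqrt{n\log n})$ in the relevant regime (as $\log^{1.5}n = o(\sqrt n)$). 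When $\Delta \le 2\sqrt{n/\log n}$, the algorithm takes the base-case branch directly, and Lemma~\ref{lem:timesubgraph} (base case) gives $O(m + W\log n) = O(W\log n + m\log n)$, which matches since $\min\{\log n, \frac{\sqrt{n\log n}}{\Delta}\} = \log n$ there. I would finish by noting $W \ge m$ always, so the $m\log n$ can also be seen as subsumed when convenient, but I will keep it explicit to match the statement.

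The main obstacle I anticipate is making the ``weight halves per level'' claim fully rigorous. Unlike the maximum-degree bound, which is a clean per-vertex consequence of Observation~\ref{Basic Property}, the statement $\sum_{H\in L_i} W_H = O(W/2^i)$ requires tracking how $w(e) = \min\{\de(u),\de(v)\}$ evolves as both endpoints' degrees shrink — and an edge present at level $i$ contributes $\min\{\de_{H}(u),\de_H(v)\}$ with $\de_H$ the degree in its level-$i$ subgraph, which is $\Theta(2^{-i}\de_G(\cdot))$ up to additive constants. The additive $\pm 1$ per level accumulates to $\pm i$, which is negligible against $\de_G/2^i$ only while $\de_G/2^i \gg i$; precisely at the bottom level this ratio is $\Theta(\sqrt{n/\log n}) \gg \log n$, so the slack is controlled, but this needs a clean inductive statement about degrees in the recursion tree — which is exactly what the appendix properties referenced after Section~\ref{Euler Partitioning.1} are set up to provide. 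A secondary bookkeeping point is confirming the geometric series for the $\frac{n2^i}{\Delta}$ term is increasing in $i$ (so dominated by the last level), which it manifestly is, and that the number of levels is genuinely $O(\log n)$, which follows since $\Delta \le n$ and the recursion halves $\Delta$ each step until it drops below $2\sqrt{n/\log n}$.
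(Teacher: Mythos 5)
Your overall plan is the paper's plan --- sum the bound of Lemma~\ref{lem:timesubgraph} level by level, using the halving of degrees and weights down the recursion tree --- but the key per-level weight estimate you rely on is wrong as stated, and your proposed way of justifying it would not work. You claim $\sum_{H\in L_i} W_H = O(W/2^i)$, arguing that the additive slack from Observation~\ref{Basic Property} accumulates to $\pm i$ and is negligible because degrees at the bottom level are still $\Theta(\sqrt{n/\log n}) \gg \log n$. Two problems: first, the slack per vertex is in fact $O(1)$ (it is a geometric sum, as in Claim~\ref{degrees}), not $\pm i$; second, and more importantly, the ``degrees stay large'' argument only applies to the \emph{maximum} degree. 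Low-degree vertices do not keep halving: every edge present in a level-$i$ subgraph has weight at least $1$ there, so $\sum_{H\in L_i} W_H \ge m$ at every level, while $W/2^i$ can be far below $m$ (take a star or a near-matching, where $W=\Theta(m)$, already at $i$ a large constant). The correct statement is the paper's Claim~\ref{cl:boundwt}, $\sum_{H\in L_i} w(H) \le 2^{-i}W + 2m$, with an unavoidable additive $+O(m)$.

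This gap is repairable but not free: once the $+O(m)$ is carried into the level-$i$ bound, each level contributes extra terms of order $\frac{m}{\Delta_i}\cdot\frac{n}{\Delta_i}$ and $\frac{m}{\Delta_i}\log n$ with $\Delta_i=2^{-i}\Delta$, and one must sum these geometric series down to the bottom level, where $\Delta_i=\Theta(\sqrt{n/\log n})$; they evaluate to $O\left(\frac{mn}{\Delta_{i^\star}^2}\right)=O(m\log n)$ and $O\left(\frac{m\log n}{\Delta_{i^\star}}\right)=O\left(\frac{m\log^{1.5}n}{\sqrt n}\right)=O(m\log n)$, so the final bound of the lemma survives --- this is exactly the computation behind Corollary~\ref{cor:extimeLi} and the paper's final summation. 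Your treatment of the $W$-part of the sum (geometric growth of $\frac{n2^i}{\Delta}$ dominated by the last level, the $\frac{W}{\Delta}\log^2 n$ term absorbed since $\log^{1.5}n = o(\sqrt n)$, and the base case $\Delta\le 2\sqrt{n/\log n}$ handled by Lemma~\ref{Color-Edges Bound}) matches the paper and is fine; what is missing is the correct weight bound with the additive $m$ term and the accompanying bookkeeping for it.
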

\noindent
{\bf Remark.}
Claim \ref{cl:basearb} yields $W = O(m \alpha)$, hence the expected runtime of the algorithm is $O(\alpha m\cdot \min\{\log n,\frac{\sqrt{n\log n}}{\Delta}\}+m\log n)$,
or equivalently, 
    $O(\min\{m \Delta \cdot \log n,m\sqrt{n\log n}\}\cdot \frac{\alpha}{\Delta}+m\log n)$.
It is straightforward to achieve the same bound on the running time (up to a logarithmic factor) with high probability. (As mentioned, we do not attempt to optimize polylogarithmic factors in this work.) 
Thus we derive the following corollary, which concludes the proof of Theorem \ref{th:main}.
\begin{corollary}
    One can compute a $(\Delta+1)$-coloring in any $n$-vertex $m$-edge graph of arboricity $\alpha$ and maximum degree $\Delta$ within a high probability runtime bound of
    $$O\left(\min\{m \Delta \log^{2} n,m \sqrt{n}\log^{1.5} n\}\cdot \frac{\alpha}{\Delta}+m\log^{2} n\right).$$
\end{corollary}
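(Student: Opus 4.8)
The plan is to derive the high-probability bound directly from the expectation bound of Lemma~\ref{lem:last} via a standard boosting argument, and then substitute $W = O(m\alpha)$ from Claim~\ref{cl:basearb}. First I would observe that Lemma~\ref{lem:last} gives an expected running time of $O(W\cdot\min\{\log n,\frac{\sqrt{n\log n}}{\Delta}\}+m\log n)$; call this bound $B$. Since every intermediate object maintained by the algorithm (fans, alternating paths, the partition, the color classes) is a proper partial coloring at all times, the algorithm is always in a valid state, so we may abort it after it has run for $c\cdot B$ steps for a suitable constant $c$ and restart it with fresh randomness. By Markov's inequality, each independent run finishes within $c\cdot B$ steps with probability at least $1/2$; hence after $O(\log n)$ independent restarts the algorithm terminates with probability at least $1 - n^{-\Omega(1)}$, and the total running time is $O(B\log n)$ with high probability. (One subtlety: the recursion tree has $O(\log \Delta) = O(\log n)$ levels and the restart argument should be applied to the top-level call as a whole, not per recursive call, so that the $\log n$ factor is incurred only once.)

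Next I would plug in $W = O(m\alpha)$, which holds by Claim~\ref{cl:basearb} since $G$ has arboricity $\alpha$. This turns $B$ into $O\big(m\alpha\cdot\min\{\log n,\tfrac{\sqrt{n\log n}}{\Delta}\}+m\log n\big)$, and the high-probability bound becomes $O\big(m\alpha\log n\cdot\min\{\log n,\tfrac{\sqrt{n\log n}}{\Delta}\}+m\log^2 n\big)$. Distributing the $\min$ over the two terms and writing $\alpha = \frac{\alpha}{\Delta}\cdot\Delta$ in each gives $O\big(m\Delta\log^2 n\cdot\frac{\alpha}{\Delta}\big)$ from the first alternative and $O\big(m\sqrt{n}\log^{1.5}n\cdot\frac{\alpha}{\Delta}\big)$ from the second, so the combined bound is $O\big(\min\{m\Delta\log^2 n, m\sqrt{n}\log^{1.5}n\}\cdot\frac{\alpha}{\Delta} + m\log^2 n\big)$, exactly as claimed. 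The additive $m\log^2 n$ term absorbs the $O(m\log n)$ of Lemma~\ref{lem:last} after the extra $\log n$ from the restart argument.

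The main obstacle I anticipate is making the restart argument clean at the level of the recursion. Because \texttt{Recursive-Color-Edges} calls itself on the two Euler-partitioned subgraphs, a naive ``restart on timeout'' applied inside every recursive call could multiply the overhead by $\log n$ per level, i.e. by $\operatorname{polylog}(n)$ overall, which would be too lossy if one were chasing tight polylog factors. The fix is to treat the entire computation (all recursion levels together) as a single randomized procedure whose total running time has expectation $B$ by Lemma~\ref{lem:last}, and to apply Markov and the $O(\log n)$-restart boosting only to that top-level quantity; since the paper explicitly does not optimize polylog factors, this suffices and the argument is routine once set up this way. Everything else — the substitution $W=O(m\alpha)$ and the arithmetic rearranging the $\min$ — is bookkeeping.
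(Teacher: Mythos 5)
Your proposal is correct and matches the paper's (largely implicit) argument: the paper simply substitutes $W = O(m\alpha)$ from Claim~\ref{cl:basearb} into Lemma~\ref{lem:last} and remarks that the expectation bound converts to a high-probability bound at the cost of one extra $\log n$ factor, which is exactly what your Markov-plus-$O(\log n)$-restarts boosting (applied, as you note, to the top-level call as a whole) delivers. Your arithmetic distributing the $\min$ and absorbing the additive term into $m\log^2 n$ is also correct.
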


\newpage
\bibliography{arb}

\newpage
\appendix
\section{Figures}
\label{app:figs}

\begin{figure}[h]
    \centering
    \includegraphics[width=0.35\textwidth]{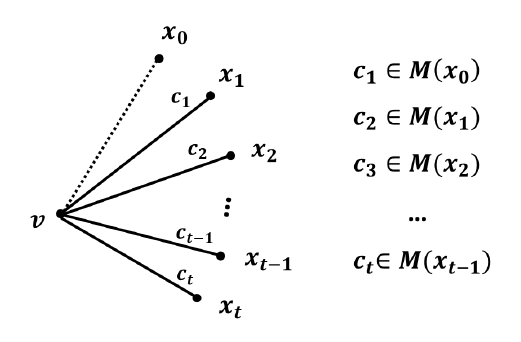}
    \caption{An illustration of a fan}
    \label{fig:fan}
\end{figure}

\begin{figure}[h]
    \centering
    \includegraphics[width=0.5\textwidth]{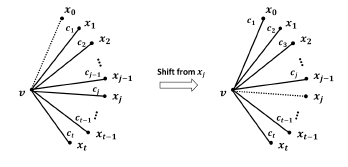}
    \caption{An illustration of fan shifting}
    \label{fig:fanShift}
\end{figure}

\begin{figure}[h]
    \centering
    \includegraphics[width=0.5\textwidth]{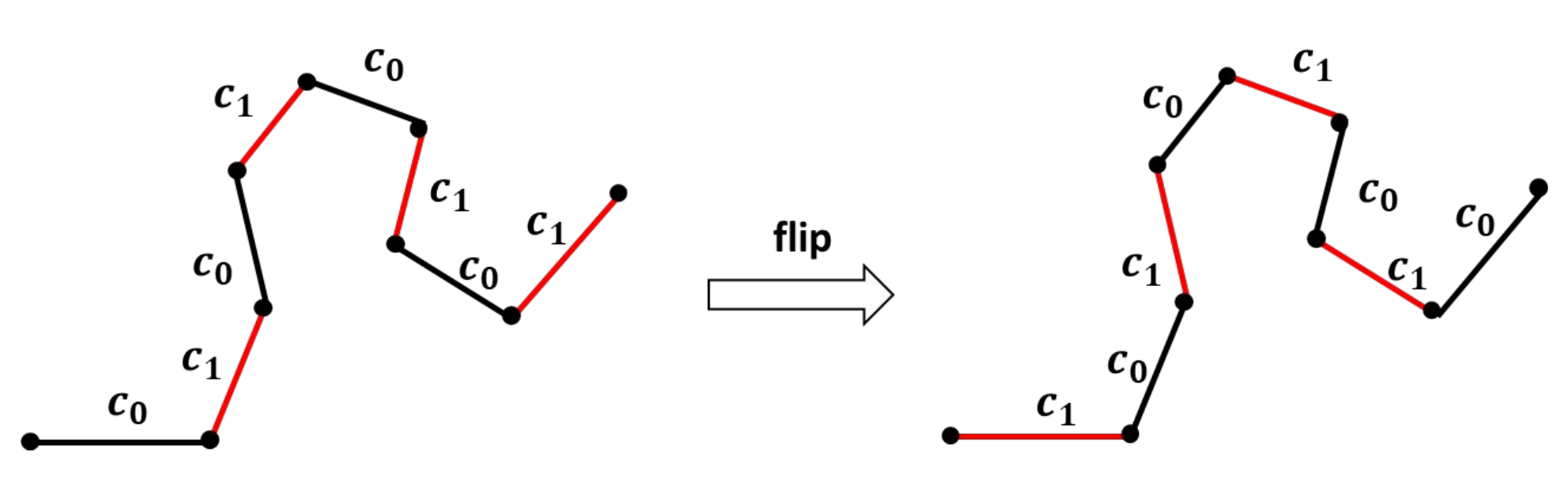}
    \caption{An illustration of path flipping}
    \label{fig:pathFlip}
\end{figure}

\section{Data Structures}
\label{Data Sets}
In this appendix we briefly describe the data structures used by our algorithms; it is readily verified that the initialization time of the data structures as well as their space usage is linear in the graph size.

    First, for every (non-isolated) vertex $v$ we maintain a hash table of size $O(\de(v))$ that contains, for every color $c \notin M(v)$, a pointer to the edge incident to $v$ colored $c$. 
    Clearly, $O(1)$ (expected) time suffices for checking whether or not any color $c$ is missing on $v$ and for finding the edge that occupies color $c$ if $c\notin M(v)$, as well as for updating the hash table following a color $c \in M(v)$ that leaves $M(v)$ or vice versa.

    To quickly find a missing color on any (non-isolated) vertex $v$, we  maintain a non-empty list of all the colors in $M(v)\cap [\de(v)+1]$, accompanied with an array of size $\de(v)+1$ of the color palette $[\de(v)+1]$, with mutual pointers between the two. 
    Clearly, $O(1)$ time suffices for finding an arbitrary missing color on $v$ (via the list) and for updating the list and the corresponding array due to a coloring and uncoloring of any edge incident to $v$.

    Next, we argue that a random missing color on any vertex $v$ can be found in expected time $O(\de(v))$. 
     If $\de(v)>\Delta/2$, we can create an auxiliary array of all the missing colors of $v$ in $O(\Delta)=O(\de(v))$ time and sample a random color from the auxiliary array.
     In the complementary case $\de(v)\leq \Delta/2$, we can just sample a random color from the entire color palette $[\Delta+1]$ repeatedly, until sampling a color that is missing on $v$.
     With probability at least $1/2$ we sample a missing color on $v$, so the expected time of the process is $O(1)$.

        Finally, we maintain the set $\eun$ of all uncolored edges (and a variable holding its size $|\eun|$) via an array of size $m$, where the first $|\eun|$ entries of the array hold pointers to the uncolored edges, and we also maintain, for each uncolored edge $e$, a variable $ind(e)$ holding its respective index in the array. We can determine the number of uncolored edges in $O(1)$ time
        via the variable $|\eun|$. When an uncolored edge $e$ gets colored, we first copy to  position $ind(e)$ of the array the pointer to the uncolored edge $e'$ corresponding to position $|\eun|$ of the array, then update the respective index $ind(e')$ of edge $e'$ to $ind(e)$, 
        and finally decrement the variable $|\eun|$. When a colored edge $e$ gets uncolored, we first increment the variable $|\eun|$, then put in position $|\eun|$ of the array a pointer to edge $e$, and finally set the  index $ind(e)$ of edge $e$ to be $|\eun|$. 
All of this clearly takes $O(1)$ time.
Using the array and the variable $|\eun|$, we can  pick a random uncolored edge in $O(1)$ time.

\section{Deferred Proofs from Section~\ref{sec:advanced}}\label{app:proofs}

This appendix contains the missing proofs from from Section~\ref{sec:advanced}.

\subsection{Properties of \texttt{Euler Partition}}

Recall that $\Delta = \Delta(G)$ denotes the maximum degree of $G$ and let $W =  w(G) =\sum_{e \in E} w(e)$
denote the weight of $G$.
We also define:
$\Delta_{i}:=2^{-i}\Delta$ and $W_{i}:=2^{-i}W$.
For a subgraph $H$ of $G$, let $w_H$
be the weight function of $H$, i.e., $w_H(e) = \min\{\de_{H}(u),\de_{H}(v)\}$ is the minimum degree in $H$ of the two endpoints of any edge $e = (u,v)$ in $H$,
and let
$w(H) =\sum_{e \in H} w_H(e)$
denote the weight of $H$.

\begin{claim}
\label{degrees}
    For any level $L_i$, any subgraph $H$ of $G$ at level $L_{i}$ and 
    any vertex $v \in G$, the degree $\de_{H}(v)$ of $v$ in $H$ satisfies    $2^{-i}\cdot \de_{G}(v)-2 \leq \de_{H}(v)
    \leq 2^{-i}\cdot \de_{G}(v)+2$.
\end{claim}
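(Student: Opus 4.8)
The claim asserts that at every level $L_i$ of the recursion tree, the degree of any fixed vertex $v$ in a level-$i$ subgraph $H$ is within $\pm 2$ of $2^{-i}\de_G(v)$. The natural approach is induction on $i$, with Observation~\ref{Basic Property} as the one-step engine. The base case $i = 0$ is trivial, since the only level-$0$ subgraph is $G$ itself and $\de_G(v) = 2^{0}\de_G(v)$. For the inductive step, suppose $H$ at level $L_i$ satisfies $2^{-i}\de_G(v) - 2 \le \de_H(v) \le 2^{-i}\de_G(v) + 2$, and let $H'$ be one of the two children of $H$ at level $L_{i+1}$, obtained by applying \texttt{Euler partition} to $H$. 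By Observation~\ref{Basic Property} applied to $H$ and its child $H'$, we have $\tfrac{1}{2}\de_H(v) - 1 \le \de_{H'}(v) \le \tfrac{1}{2}\de_H(v) + 1$.

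\textbf{Combining the bounds.} Plugging the inductive hypothesis into the lower bound gives
\[
\de_{H'}(v) ~\ge~ \tfrac{1}{2}\de_H(v) - 1 ~\ge~ \tfrac{1}{2}\bigl(2^{-i}\de_G(v) - 2\bigr) - 1 ~=~ 2^{-(i+1)}\de_G(v) - 2,
\]
and symmetrically for the upper bound,
\[
\de_{H'}(v) ~\le~ \tfrac{1}{2}\de_H(v) + 1 ~\le~ \tfrac{1}{2}\bigl(2^{-i}\de_G(v) + 2\bigr) + 1 ~=~ 2^{-(i+1)}\de_G(v) + 2.
\]
This is exactly the claim at level $L_{i+1}$, completing the induction. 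Note that the additive error does \emph{not} accumulate across levels: each halving step shrinks the accumulated $\pm 2$ error from the previous level down to $\pm 1$ of its own, and then the fresh $\pm 1$ from the current Euler partition brings it back up to $\pm 2$ — so the bound is a fixed point of the recurrence $\epsilon \mapsto \epsilon/2 + 1$, which is why the constant $2$ is the right choice and stays put.

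\textbf{Main obstacle.} There is no real obstacle here — the statement is a clean telescoping induction, and the only thing to be careful about is making sure Observation~\ref{Basic Property} is being applied to the correct (parent, child) pair and that the constant is genuinely stable rather than growing like $2(1 - 2^{-i})$ or similar. One should also note in passing that \texttt{Recursive-Color-Edges} is a legitimate instance of the generic Algorithm $A$ from Section~\ref{Euler Partitioning.1}, so that the recursion-tree terminology and Observation~\ref{Basic Property} apply verbatim; this was already observed in the text preceding Lemma~\ref{lem:timesubgraph}. I would present the whole argument in three or four lines.
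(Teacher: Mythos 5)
Your proof is correct and takes essentially the same route as the paper: induction on the recursion level with Observation~\ref{Basic Property} as the one-step bound. The only cosmetic difference is that the paper inducts on a strengthened statement with error term $\sum_{k=1}^{i}2^{1-k} = 2(1-2^{-i})$ and then observes this is at most $2$, whereas you induct directly on the constant $2$ using the fixed-point property of $\epsilon \mapsto \epsilon/2 + 1$ — both close the same induction.
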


\begin{proof}
    We will prove the following stronger claim by induction on $i$:
$$2^{-i}\cdot \de_{G}(v)-\sum_{k=1}^{i}2^{1-k} ~\leq~ \de_{H}(v) ~\leq~ 2^{-i}\cdot \de_{G}(v)+\sum_{k=1}^{i}2^{1-k}.$$

For the induction basis $i=0$, we have $H = G$ and
$$2^{-i}\cdot \de_{G}(v)-\sum_{k=1}^{i}2^{1-k} ~=~ \de_G(v) ~=~ \de_{H}(v) ~=~ 2^{-i}\cdot \de_{G}(v)+\sum_{k=1}^{i}2^{1-k}.$$

For the induction step, we assume that the claim holds for every subgraph at level $L_{i-1}$, and prove that it holds for every subgraph $H$ at level $L_{i}$.
Let $H_{i-1}$ be the subgraph at level $L_{i-1}$, such that $H$ is one of the two subgraphs of $H_{i-1}$ at level $L_i$.
By Observation \ref{Basic Property},
$\frac{1}{2}\de_{H_{i-1}}(v)-1\leq \de_{H}(v)\leq \frac{1}{2}\de_{H_{i-1}}(v)+1.$
By the induction hypothesis, 
$$2^{-i+1}\cdot \de_{G}(v)-\sum_{k=1}^{i-1}2^{1-k} ~\leq~ \de_{H_{i-1}}(v) ~\leq~ 2^{-i+1}\cdot \de_{G}(v)+\sum_{k=1}^{i-1}2^{1-k}.$$
It follows that
$$\frac{1}{2}\left(2^{-i+1}\cdot \de_{G}(v)-\sum_{k=1}^{i-1}2^{1-k}\right)-1 ~\leq~ \de_{H}(v) ~\leq~ \frac{1}{2}\left(2^{-i+1}\cdot \de_{G}(v)+\sum_{k=1}^{i-1}2^{1-k}\right)+1,$$
and we get that
$$2^{-i}\cdot \de_{G}(v)-\sum_{k=1}^{i}2^{1-k} ~\leq~ \de_{H}(v) ~\leq~ 2^{-i}\cdot \de_{G}(v)+\sum_{k=1}^{i}2^{1-k}.$$
\end{proof}

\begin{claim} \label{cl:DeltaH}
    For any level $L_i$ and  any subgraph $H$ of $G$ at level $L_{i}$, 
    the maximum degree  $\Delta_{H}$ of $H$ satisfies
    $\Delta_{i}-2 \leq \Delta_{H} \leq \Delta_{i}+2$.
\end{claim}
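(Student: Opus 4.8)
This is a direct corollary of Claim~\ref{degrees}, which controls the degree of \emph{every} vertex in any level-$L_i$ subgraph. The plan is to handle the two inequalities separately, reading off the upper bound from the uniform upper bound in Claim~\ref{degrees} and the lower bound by evaluating Claim~\ref{degrees} at a maximum-degree vertex of $G$.

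For the upper bound, first I would note that for an arbitrary vertex $v \in G$, Claim~\ref{degrees} gives $\de_H(v) \le 2^{-i}\de_G(v) + 2 \le 2^{-i}\Delta + 2 = \Delta_i + 2$, using $\de_G(v) \le \Delta$ and the definition $\Delta_i := 2^{-i}\Delta$. Since this holds for every vertex, $\Delta_H = \max_{v} \de_H(v) \le \Delta_i + 2$.

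For the lower bound, I would fix a vertex $v^\ast \in G$ with $\de_G(v^\ast) = \Delta$ (a vertex attaining the maximum degree of $G$). Applying Claim~\ref{degrees} to $v^\ast$ yields $\de_H(v^\ast) \ge 2^{-i}\de_G(v^\ast) - 2 = 2^{-i}\Delta - 2 = \Delta_i - 2$. Since $\Delta_H \ge \de_H(v^\ast)$, we conclude $\Delta_H \ge \Delta_i - 2$. Combining the two bounds gives $\Delta_i - 2 \le \Delta_H \le \Delta_i + 2$, as claimed.

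There is no real obstacle here; the only point worth flagging is that the vertex $v^\ast$ realizing the maximum degree in $G$ need not realize the maximum degree in $H$, but this does not matter: for the lower bound we only need \emph{some} vertex of $H$ with degree at least $\Delta_i - 2$, and $v^\ast$ supplies one.
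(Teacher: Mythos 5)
Your proposal is correct and matches the paper's proof essentially verbatim: the paper also obtains the upper bound by applying Claim~\ref{degrees} to a maximum-degree vertex of $H$ (equivalent to your "uniform bound over all vertices" phrasing) and the lower bound by applying it to a maximum-degree vertex of $G$. No gaps; your remark that the max-degree vertex of $G$ need not be max-degree in $H$ is exactly the point the paper implicitly uses as well.
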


\begin{proof}
    Let $v$ be a vertex with maximum degree in $H$. By Claim \ref{degrees},  $$\Delta_{H} ~=~ \de_{H}(v) ~\leq~ 2^{-i}\cdot \de_{G}(v)+2 ~\leq~ 2^{-i}\cdot \Delta+2 ~=~ \Delta_{i}+2.$$
Now let $u$ be a vertex with maximum degree in $G$. Applying Claim \ref{degrees} again, we obtain
$$\Delta_{H} ~\geq~ \de_{H}(u) ~\geq~ 2^{-i}\cdot \de_{G}(u)-2=2^{-i}\cdot \Delta-2 ~=~ \Delta_{i}-2.$$
It follows that $\Delta_{i}-2\leq \Delta_{H}\leq \Delta_{i}+2$.
\end{proof}

\begin{claim} \label{cl:boundwt}
    For every level $L_{i}$: $\sum_{H\in L_{i}}w(H)\leq W_{i}+2m.$
\end{claim}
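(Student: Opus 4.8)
The plan is to rewrite the left-hand side as a sum over edges rather than over subgraphs. Since the \texttt{Euler partition} procedure partitions the edge set at each step, every edge $e$ of $G$ lies in exactly one subgraph $H$ at level $L_i$; call it $H(e)$. Hence
\[
\sum_{H \in L_i} w(H) ~=~ \sum_{H \in L_i} \sum_{e \in H} w_H(e) ~=~ \sum_{e \in E} w_{H(e)}(e).
\]
So it suffices to bound $w_{H(e)}(e)$ for each individual edge $e$.

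Next I would bound the per-edge contribution using Claim \ref{degrees}. Fix an edge $e = (u,v)$ and let $H = H(e)$ be its subgraph at level $L_i$. By Claim \ref{degrees} applied to $u$ and to $v$, we have $\de_H(u) \le 2^{-i}\de_G(u) + 2$ and $\de_H(v) \le 2^{-i}\de_G(v) + 2$. Therefore
\[
w_H(e) ~=~ \min\{\de_H(u), \de_H(v)\} ~\le~ \min\{2^{-i}\de_G(u) + 2,\; 2^{-i}\de_G(v) + 2\} ~=~ 2^{-i}\cdot w(e) + 2,
\]
where the last equality uses that adding the constant $2$ commutes with taking the minimum and that $w(e) = \min\{\de_G(u),\de_G(v)\}$ by definition.

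Combining the two displays and summing over all $m$ edges of $G$ gives
\[
\sum_{H \in L_i} w(H) ~\le~ \sum_{e \in E}\left(2^{-i}\cdot w(e) + 2\right) ~=~ 2^{-i}\sum_{e \in E} w(e) + 2m ~=~ 2^{-i} W + 2m ~=~ W_i + 2m,
\]
which is the claimed bound. I do not anticipate a genuine obstacle here: the only mild subtlety is making sure the min-of-upper-bounds step is stated cleanly (it works because the additive slack is a uniform constant $2$, independent of the vertex), and that we correctly invoke the edge-partition property of the recursion tree so that each edge is counted exactly once. Everything else is a one-line application of Claim \ref{degrees}.
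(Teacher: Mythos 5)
Your proof is correct and follows essentially the same route as the paper's: both rewrite $\sum_{H\in L_i} w(H)$ as a sum over edges (using that the Euler partition places each edge in exactly one level-$L_i$ subgraph), bound each term by $2^{-i}w(e)+2$ via Claim \ref{degrees} applied to both endpoints, and sum to get $W_i + 2m$. No issues.
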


\begin{proof}
    \begin{eqnarray*}
    \sum_{H\in L_{i}}w(H)&=&\sum_{H\in L_{i}}\sum_{e\in H}w_{H}(e)
    ~=~\sum_{H\in L_{i}}\sum_{(u,v)\in H}\min\{\de_{H}(u),\de_{H}(v)\}
    \\&\mydineq&~~~\sum_{H\in L_{i}}\sum_{(u,v)\in H}(2^{-i}\min\{\de_{G}(u),\de_{G}(v)\}+2),
\end{eqnarray*}
Noting that every edge $e$ in $G$ appears in exactly one subgraph of $G$ at level $L_{i}$, it follows that
\begin{eqnarray*}
 \sum_{H\in L_{i}}w(H) 
    &\leq&\sum_{H\in L_{i}}\sum_{(u,v)\in H}(2^{-i}\min\{\de_{G}(u),\de_{G}(v)\}+2)\\
    &=&\sum_{(u,v)\in G}(2^{-i}\min\{\de_{G}(u),\de_{G}(v)\}+2)\\
    &=&\sum_{e\in G}(2^{-i}w_{G}(e)+2)
    ~=~2^{-i}\sum_{e\in G}w_{G}(e)+2m ~=~ W_{i}+2m.
\end{eqnarray*}
\end{proof}

\subsection{Proof of Claim~\ref{low fans bound}}

For edge $e_{i}=(u_{i},v_{i})$,  $u_i$ is chosen as it satisfies $\de(u_i) \le \de(v_i)$, so
$\de(u_{i})=w(e_i)$.
To complete the proof, we note that Eq.\ \ref{eq:uncoloredwt} implies that the total weight of the uncolored edges is bounded (deterministically) by $O\left(\frac{W_{H}}{\Delta_{H}}\right)$.

\subsection{Proof of Claim~\ref{low paths bound}}

We say that a vertex $v$ has \EMPH{high degree} if $\de(v) \geq \Delta_{H}/2$, and it has \EMPH{low degree} otherwise.
        Similarly, we say that edge $e$ has \EMPH{high weight} if $w(e) \geq \Delta_{H}/2$ (i.e. both its endpoints have high degree), and it has \EMPH{low weight} otherwise.
        We need to bound $\ex \left[\sum_{i=1}^{l}|P_{i}|\mbox{ $|$ } U=l\right]$.
        We will bound the expected sum of the lengths of paths of high weight edges (i.e., paths $P_{i}$ such that the corresponding edge $e_{i}$ has high weight) separately from the expected sum of the lengths of paths of low weight edges (i.e., paths $P_{i}$ such that $e_{i}$ has low weight); in fact, for high weight edges, the upper bound  will hold deterministically. 

        \paragraph{High weight edges.}
         Eq.\ \ref{eq:uncoloredwt} implies that the total weight of the uncolored edges is bounded by $O\left(\frac{W_{H}}{\Delta_{H}}\right)$,
        hence the number of uncolored edges of high weight is bounded by
        $$O\left(\frac{W_{H}}{\Delta_{H}}\cdot \frac{1}{\Delta_{H}/2}\right)~=~
        O\left(\frac{W_{H}}{\Delta_{H}^{2}}\right).$$
        Since each chosen alternating path is simple, and as such has at most $n-1$ edges, we can bound the sum of the lengths of the paths of high weight edges (deterministically) by
        $$O\left(n \cdot \frac{W_{H}}{\Delta_{H}^{2}}\right) ~=~
        O\left(\frac{W_{H}}{\Delta_{H}}\cdot\frac{n}{\Delta_{H}}\right) ~=~
        O\left(\frac{W_{H}}{\Delta_{H}}\cdot\left(\frac{n}{\Delta_{H}}+\log n\right)\right).$$

    \paragraph{Low weight edges.}
        
        Let $S_{H}$ be the random variable given by the sum of lengths of the paths of the low weight edges. We have to bound $\ex[S_{H}~|~U=l]$.
        
        For every $1 \le i \le l$, let $X_{i}$ be the random indicator variable that gets value 1 iff the edge colored at the $i$th iteration has low weight, and define  the random variable 
        $Y_{i} :=|I(P_{i})|\cdot X_{i}$. 
        
        Observe that
        \begin{eqnarray} \label{eq:sh}
            \nonumber \ex\left[S_{H}~|~U=l\right]&=& \ex\left[\sum_{1\le i\le l: X_{i}\neq 0}|P_{i}|~\middle|~U=l\right]
            ~=~\ex\left[\sum_{i=1}^{l}|P_{i}|\cdot X_{i}~\middle|~U=l\right]\\ \nonumber
            &=&\sum_{i=1}^{l}\ex\left[|P_{i}|\cdot X_{i}~|~U=l\right]
            ~~~~~~~~~~\myobeq~~~~~~~~~~ \sum_{i=1}^{l}\ex[(|I(P_{i})|+O(1))\cdot X_{i}~|~U=l]\\ 
            &=&\sum_{i=1}^{l}(\ex[|I(P_{i})|\cdot X_{i}~|~U=l]+O(1))
            ~=~\sum_{i=1}^{l}(\ex[Y_{i}~|~U=l]+O(1)).
        \end{eqnarray}

        Let us fix an arbitrary iteration $i$, $1 \le i \le l$. We shall bound $\ex[Y_{i}~|~U=l]$ in two steps:
        First, we fix an arbitrary coloring $\chi \in PCL(H,l+1-i)$ and bound
        $\ex[Y_{i}~|~\chi_{i}=\chi ~\cap~ U = l] = 
        \ex[Y_{i}~|~\chi_{i}=\chi]$;
        second, we bound $\ex[Y_{i}~|~U=l]$.
        
        Given the coloring $\chi$, for every path $P$ in $MP(H,\chi)$, let $u_{0}(P)$ and $u_{|P|}(P)$ be the two endpoints of P, and let $c_{0}(P)$ and $c_{|P|}(P)$ be the missing colors on $u_{0}(P)$ and $u_{|P|}(P)$ from the two colors of $P$, respectively.
        We have
        \begin{eqnarray}
            \label{yconchi}
            \nonumber
            \ex[Y_{i}~|~\chi_{i}=\chi]
            &=&\sum_{P\in MP(H,\chi)}(0\cdot \pr(\mbox{$P_{i}=P$ $\cap$ $X_{i}$=0}~|~\chi_{i}=\chi) + |I(P
            )|\cdot \pr(\mbox{$P_{i}=P$ $\cap$ $X_{i}$=1}~|~\chi_{i}=\chi))\\ 
            &=&\sum_{P\in MP(H,\chi)}|I(P
            )|\cdot \pr(\mbox{$P_{i}=P$ $\cap$ $X_{i}$=1}~|~\chi_{i}=\chi).
        \end{eqnarray}

    Now, for every vertex $v$ let $low(\chi,v)$ be the number of edges of $H$ uncolored by $\chi$ with low weight, incident on $v$.
    
    Observe that for every path $P \in MP(H,\chi)$ :

    \begin{eqnarray}
        \label{pconchi}
        \nonumber
        \pr(\mbox{$P_{i}=P$ $\cap$ $X_{i}$=1}~|~\chi_{i}=\chi) &\leq~& \pr(\mbox{$c_{i}=c_{0}(P)$ $|$ $u_{i}=u_{0}(P)$ $\cap$ $X_{i}=1$ $\cap$ $\chi_{i}=\chi$})\\
        \nonumber && \cdot \pr(u_{i}=u_{0}(P)\mbox{ $\cap$ $X_{i}=1$}~|~\chi_{i}=\chi)\\
        \nonumber && + \pr(\mbox{$c_{i}=c_{|P|}(P)$ $|$ $u_{i}=u_{|P|}(P)$ $\cap$ $X_{i}=1$ $\cap$ $\chi_{i}=\chi$})\\
        \nonumber && \cdot \pr(u_{i}=u_{|P|}(P)\mbox{ $\cap$ $X_{i}$=1}~|~\chi_{i}=\chi)\\
        \nonumber&\leq& \frac{1}{\Delta_{H}/2+1}\cdot\frac{low(\chi,u_{0}(P))}{U(\chi)}+
        \frac{1}{\Delta_{H}/2+1}\cdot\frac{low(\chi,u_{|P|}(P))}{U(\chi)}
        \\
         \nonumber&\leq& \frac{1}{\Delta_{H}/2+1}\cdot\frac{3}{l+1-i}+
        \frac{1}{\Delta_{H}/2+1}\cdot\frac{3}{l+1-i}\\
        &=& O\left(\frac{1}{(l+1-i) \cdot \Delta_{H}}\right),
    \end{eqnarray}
    where the second inequality holds as the degrees of low degree vertices are smaller than $\Delta_{H}/2$ by definition, hence any low degree vertex has at least $\Delta_{H}+1-\Delta_{H}/2=\Delta_{H}/2+1$ missing colors to randomly choose from;
    the third inequality holds as $U(\chi) = l+1 -i$ by definition and as the uncolored edges of any possible $\chi_{i}$ were previously colored by (at most) three colors, hence by the validity of the coloring every vertex may have at most three uncolored edges incident on it.  

    By plugging Eq. \ref{pconchi} into Eq. \ref{yconchi}, we obtain 

    \begin{eqnarray*}
        \ex[Y_{i}~|~\chi_{i}=\chi]&=& \sum_{P\in MP(H,\chi)}|I(P)|\cdot \pr(\mbox{$P_{i}=P$ $\cap$ $X_{i}$=1}~|~\chi_{i}=\chi)\\ 
        &=& O\left(\frac{1}{(l+1-i)\cdot \Delta_{H}}\sum_{P\in MP(H,\chi)}|I(P)|\right)\\ 
        &\myeq&~~~~~ O\left(\frac{1}{(l+1-i) \cdot \Delta_{H}}\sum_{e\in \ecl(H,\chi)}w(e)\right)\\ 
        &=& O\left(\frac{1}{(l+1-i) \cdot \Delta_{H}}\sum_{e\in E_{H}}w(e)\right)\\
        &=& O\left(\frac{W_{H}}{(l+1-i) \cdot \Delta_{H}}\right).
    \end{eqnarray*}
   It follows that
    \begin{eqnarray}
    \label{eq:eqyi}
        \nonumber
        \ex[Y_{i}~|~U=l]&=&
        \sum_{\chi \in PCL(H,l+1-i)}\ex[Y_{i}~|~\chi_{i}=\chi ~\cap~ U=l]\cdot \pr(\chi_{i}=\chi~|~U=l)\\
        \nonumber
        &=& \sum_{\chi \in PCL(H,l+1-i)}\ex[Y_{i}~|~\chi_{i}=\chi]\cdot \pr(\chi_{i}=\chi~|~U=l)
        \\
        \nonumber&=&\sum_{\chi \in PCL(H,l+1-i)}O\left(\frac{W_{H}}{(l+1-i) \cdot \Delta_{H}}\right)\cdot \pr(\chi_{i}=\chi~|~U=l)\\
        \nonumber&=&O\left(\frac{W_{H}}{(l+1-i) \cdot \Delta_{H}}\right)\cdot\sum_{\chi \in PCL(H,l+1-i)}\pr(\chi_{i}=\chi~|~U=l)\\
        &=&O\left(\frac{W_{H}}{(l+1-i) \cdot \Delta_{H}}\right).
    \end{eqnarray}

Note that the number $l$ of uncolored edges is bounded by the total weight of the uncolored edges, which is  $O\left(\frac{W_{H}}{\Delta_{H}}\right)$ by Eq.\ \ref{eq:uncoloredwt}.
Now, plugging Eq.\ \ref{eq:eqyi} into Eq.\ \ref{eq:sh} yields
    
    \begin{eqnarray*}
        \ex[S_{H}~|~U=l]&=&\sum_{i=1}^{l}(\ex[Y_{i}~|~U=l]+O(1)) ~=~
        O\left(\sum_{i=1}^{l}\left(\frac{W_{H}}{(l+1-i)\Delta_{H}}+1\right) \right) \\
        &=& O\left(\frac{W_{H}}{\Delta_{H}}\sum_{i=1}^{l}\frac{1}{l+1-i}\right)+O(l) ~=~
        O\left(\frac{W_{H}}{\Delta_{H}} \cdot \log n\right)+O\left(\frac{W_{H}}{\Delta_{H}}\right)\\
        &=&O\left(\frac{W_{H}}{\Delta_{H}}\cdot\left(\frac{n}{\Delta_{H}}+\log n\right)\right).
        \end{eqnarray*}

This completes the proof of Claim \ref{low paths bound}.

% \subsection{Proof of Corollary~\ref{cor:extimeLi}}

% \begin{proof}
%     By  Claim \ref{cl:DeltaH},
%     Claim \ref{cl:boundwt}
%     and Lemma \ref{lem:timesubgraph},
%     the expected total time spent on all subgraphs in $L_{i}$ due to calls to Algorithm \texttt{Recursive-Color-Edges} is bounded by

%     \begin{eqnarray*}
%     \sum_{H\in L_{i}}O\left(m_{H}+\frac{W_{H}}{\Delta_{H}}\cdot \left(\frac{n}{\Delta_{H}}+\log n\right)\right) &=&
%     O\left(\sum_{H\in L_{i}}\left(m_{H}+\frac{W_{H}}{\Delta_{i}}\cdot \left(\frac{n}{\Delta_{i}}+\log n\right)\right)\right)\\
%     &=& O\left(\sum_{H\in L_{i}}m_{H}\right)+O\left(\frac{1}{\Delta_{i}}\cdot \left(\frac{n}{\Delta_{i}}+\log n\right)\sum_{H\in L_{i}}W_{H}\right) \\&=&
%     O\left(m+\frac{W_{i}+m}{\Delta_{i}}\cdot \left(\frac{n}{\Delta_{i}}+\log n\right)\right).
%     \end{eqnarray*}
    
% \end{proof}

\subsection{Concluding the Proof of Lemma~\ref{lem:timesubgraph}}

% Now we are ready to complete the proof of Lemma \ref{lem:timesubgraph}. 
Using Eq. \ref{total repair} and Claims \ref{low fans bound} and \ref{low paths bound} we conclude that
        \begin{eqnarray*}
            \ex[T(Repair\,H)] &=&
            \sum_{l=0}^{m_{H}}\ex[T(Repair\,H)~|~U=l]\cdot \pr(U=l)\\
            &=&\sum_{l=0}^{m_{H}}\left(\sum_{i=1}^{l} O(\ex[|P_{i}|~|~U=l]) + \sum_{i=1}^{l} O(\ex[\de(u_{i})~|~U=l])\right)\cdot \pr(U=l)\\
            &=&\sum_{l=0}^{m_{H}}O\left(\frac{W_{H}}{\Delta_{H}}\cdot\left(\frac{n}{\Delta_{H}}+\log n\right)+\frac{W_{H}}{\Delta_{H}}\right)\cdot \pr(U=l)\\
            &=& \sum_{l=0}^{m_{H}} O\left(\frac{W_{H}}{\Delta_{H}}\cdot\left(\frac{n}{\Delta_{H}}+\log n\right)\right)\cdot \pr(U=l)\\
            &=& O\left(\frac{W_{H}}{\Delta_{H}}\cdot\left(\frac{n}{\Delta_{H}}+\log n\right)\right) \cdot \sum_{l=0}^{m_{H}} \pr(U=l) ~=~O\left(\frac{W_{H}}{\Delta_{H}}\cdot\left(\frac{n}{\Delta_{H}}+\log n\right)\right).
        \end{eqnarray*}

\subsection{Proof of Lemma~\ref{lem:last}}

We begin with the following corollary of the preceding claims and lemmas.

\begin{corollary} \label{cor:extimeLi}
    For any level $L_{i}$, the expected total time spent on all the subgraphs in $L_{i}$ due to calls to Algorithm \texttt{Recursive-Color-Edges} while running \texttt{Recursive-Color-Edges} on $G$ is bounded by
    $$O\left(m+\frac{W_{i}+m}{\Delta_{i}}\cdot\left(\frac{n}{\Delta_{i}}+\log n\right)\right).$$
\end{corollary}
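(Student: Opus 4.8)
The plan is to sum the per-subgraph bound from Lemma~\ref{lem:timesubgraph} over all $2^i$ subgraphs at level $L_i$, and then convert the sum of the weights $\sum_{H \in L_i} W_H$ into a bound in terms of the global weight $W$ using Claim~\ref{cl:boundwt}, while replacing every occurrence of the subgraph maximum degree $\Delta_H$ by the level value $\Delta_i = 2^{-i}\Delta$ using Claim~\ref{cl:DeltaH}. First I would write, by Lemma~\ref{lem:timesubgraph},
\[
\sum_{H \in L_i} O\!\left(m_H + \frac{W_H}{\Delta_H}\cdot\left(\frac{n}{\Delta_H} + \log n\right)\right).
\]
Since every edge of $G$ lies in exactly one subgraph at level $L_i$, we have $\sum_{H \in L_i} m_H = m$, which handles the first term. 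For the second term, I would use Claim~\ref{cl:DeltaH} to replace $\Delta_H$ by $\Theta(\Delta_i)$ — here we need $\Delta_i \geq 2$ (equivalently $\Delta_H \geq \Delta_i - 2 = \Omega(\Delta_i)$ and $\Delta_H \leq \Delta_i + 2 = O(\Delta_i)$), which holds at all recursion levels of interest since the recursion bottoms out while $\Delta_i$ is still large (roughly $\sqrt{n/\log n}$); this is worth a sentence of justification. After this replacement the second term is $O\!\left(\tfrac{1}{\Delta_i}\left(\tfrac{n}{\Delta_i}+\log n\right)\sum_{H\in L_i} W_H\right)$.

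The remaining work is to bound $\sum_{H \in L_i} W_H$, where $W_H = w(H)$ in the notation of the appendix. This is exactly what Claim~\ref{cl:boundwt} provides: $\sum_{H \in L_i} w(H) \leq W_i + 2m$. Plugging this in gives
\[
O\!\left(m + \frac{W_i + m}{\Delta_i}\cdot\left(\frac{n}{\Delta_i} + \log n\right)\right),
\]
which is the claimed bound. I would also note that the expectations are taken over the internal randomness of the \texttt{Color-One-Edge} calls, which are independent of the (deterministic) Euler partitioning, so linearity of expectation applies directly across the $2^i$ subgraphs; and that Lemma~\ref{lem:timesubgraph} already accounts for the time spent at $H$ itself (partition, recurse-overhead, prune, repair) without double-counting the recursive calls, so summing over a single level $L_i$ captures exactly the work attributable to that level.

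I do not expect any serious obstacle here — this is a bookkeeping step. The only mild subtlety is the passage from $\Delta_H$ to $\Delta_i$: one must be careful that the bound in Lemma~\ref{lem:timesubgraph} is monotone in the right direction and that $\Delta_i$ is bounded away from $0$ (indeed from below by a constant, and in fact much larger) throughout the recursion, so that the $\pm 2$ slack in Claim~\ref{cl:DeltaH} only costs constant factors hidden in the $O(\cdot)$. Once that is observed, the corollary follows by combining $\sum_{H \in L_i} m_H = m$, Claim~\ref{cl:DeltaH}, and Claim~\ref{cl:boundwt}.
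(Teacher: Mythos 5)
Your proposal is correct and follows essentially the same route as the paper's proof: sum the bound of Lemma~\ref{lem:timesubgraph} over the subgraphs at level $L_i$, use $\sum_{H\in L_i} m_H = m$, replace $\Delta_H$ by $\Theta(\Delta_i)$ via Claim~\ref{cl:DeltaH}, and bound $\sum_{H\in L_i} W_H \le W_i + 2m$ via Claim~\ref{cl:boundwt}. Your extra remark that $\Delta_i$ stays large throughout the recursion (so the $\pm 2$ slack in Claim~\ref{cl:DeltaH} costs only constant factors) is a justified detail the paper leaves implicit, but it does not change the argument.
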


\begin{proof}
    By  Claim \ref{cl:DeltaH},
    Claim \ref{cl:boundwt}
    and Lemma \ref{lem:timesubgraph},
    the expected total time spent on all subgraphs in $L_{i}$ due to calls to Algorithm \texttt{Recursive-Color-Edges} is bounded by

    \begin{eqnarray*}
    \sum_{H\in L_{i}}O\left(m_{H}+\frac{W_{H}}{\Delta_{H}}\cdot \left(\frac{n}{\Delta_{H}}+\log n\right)\right) &=&
    O\left(\sum_{H\in L_{i}}\left(m_{H}+\frac{W_{H}}{\Delta_{i}}\cdot \left(\frac{n}{\Delta_{i}}+\log n\right)\right)\right)\\
    &=& O\left(\sum_{H\in L_{i}}m_{H}\right)+O\left(\frac{1}{\Delta_{i}}\cdot \left(\frac{n}{\Delta_{i}}+\log n\right)\sum_{H\in L_{i}}W_{H}\right) \\&=&
    O\left(m+\frac{W_{i}+m}{\Delta_{i}}\cdot \left(\frac{n}{\Delta_{i}}+\log n\right)\right).
    \end{eqnarray*}
    
\end{proof}

\noindent
If $\Delta \leq 2\sqrt{\frac{n}{\log n}}$, the runtime is dominated by that of Algorithm \texttt{Color-Edges}, which by
    Lemma \ref{Color-Edges Bound} is
    $$O(W\log n) ~=~
    O\left(W\cdot \min\left\{\log n,\frac{\sqrt{n\log n}}{\Delta}\right\}+m\log n\right).$$

    In the complementary case $\Delta > 2\sqrt{\frac{n}{\log n}}$,
    the recursion continues until the maximum degree is at most $2\sqrt{\frac{n}{\log n}}$.
    By Claim \ref{cl:DeltaH},
    the maximum degree of every subgraph of $L_i$, for any $i$, lies in the range $[2^{-i}\Delta-2,2^{-i}\Delta+ 2]$,
    hence the number $R$ of recursion levels satisfies $R=\log {\frac{\Delta \sqrt{\log n}}{\sqrt{n}}} \pm O(1)$.

    Thus, the expected runtime of Algorithm \texttt{Recursive-Color-Edges} is given by 
\begin{eqnarray*}
    &&\sum_{i=0}^{R-1} \mbox{ expected total time spent on all subgraphs in $L_{i}$ due to calls to \texttt{Recursive-Color-Edges}}\\
    &=&\sum_{i=0}^{R-1}O\left(m+\frac{W_{i}+m}{\Delta_{i}}\cdot\left(\frac{n}{\Delta_{i}}+\log n\right)\right) \mbox{~~~(by Corollary \ref{cor:extimeLi})}\\
    &=&\sum_{i=0}^{R-1}O(m)+\sum_{i=0}^{R-1}O\left(\frac{W_{i}}{\Delta_{i}}\cdot \frac{n}{\Delta_{i}}\right)+\sum_{i=0}^{R-1}O\left(\frac{W_{i}}{\Delta_{i}}\cdot \log n\right)+\sum_{i=0}^{R-1}O\left(\frac{m}{\Delta_{i}}\cdot \frac{n}{\Delta_{i}}\right)+\sum_{i=0}^{R-1}O\left(\frac{m}{\Delta_{i}}\cdot \log n\right) \\
    &=&O(Rm)+\sum_{i=0}^{R-1}O\left(\frac{W}{\Delta}\cdot \frac{n}{2^{-i}\Delta}\right)+\sum_{i=0}^{R-1}O\left(\frac{W}{\Delta}\cdot \log n\right)+\sum_{i=0}^{R-1}O\left(\frac{m}{2^{-i}\Delta}\cdot \frac{n}{2^{-i}\Delta}\right)+\sum_{i=0}^{R-1}O\left(\frac{m}{2^{-i}\Delta}\cdot \log n\right)\\
    &=&O(Rm)+O
    \left(\frac{W \cdot n}{\Delta^{2}}\sum_{i=0}^{R-1}2^{i}\right)+O\left(\frac{W\cdot R\cdot \log n}{\Delta}\right)+O\left(\frac{mn}{\Delta^{2}}\sum_{i=0}^{R-1}4^{i}\right)+O\left(\frac{m\cdot \log n}{\Delta}\sum_{i=0}^{R-1}2^{i}\right)\\
    &=&O(Rm)+O\left(\frac{W \cdot n}{\Delta^{2}} \cdot 2^{R}\right)+O\left(\frac{W\cdot R\cdot \log n}{\Delta}\right)+O\left(\frac{mn}{\Delta^{2}} \cdot 4^{R}\right)+O\left(\frac{m\cdot \log n}{\Delta}\cdot 2^{R}\right)\\
    &=&O(Rm)+O\left(\frac{W \cdot n}{\Delta^{2}} \cdot \frac{\Delta \sqrt{\log n}}{\sqrt{n}}\right)+O\left(\frac{W\cdot R\cdot \log n}{\Delta}\right)+O\left(\frac{mn}{\Delta^{2}} \cdot \frac{\Delta^{2} \log n}{n}\right)+O\left(\frac{m\cdot \log n}{\Delta} \cdot \frac{\Delta \sqrt{\log n}}{\sqrt{n}}\right)\\
    &=&O(m\log n)+O\left(\frac{W\sqrt{n\log n}}{\Delta}\right)+O\left(\frac{W\cdot \log^{2} n}{\Delta}\right)+O(m\log n)+O(m\log n)
 \mbox{~~~(as $R = O(\log n)$)}\\
    &=&O(m\log n)+O\left(\frac{W\sqrt{n\log n}}{\Delta}\right) ~=~ O\left(W\cdot \min\left\{\log n,\frac{\sqrt{n\log n}}{\Delta}\right\}+m\log n\right),
\end{eqnarray*}
where the last inequality holds as $\Delta > 2\sqrt{\frac{n}{\log n}}$.

\end{document}